\newcommand{\myModifier}[1]{#1'}
\newcommand{\mySet}{\widetilde{\mathcal{S}}}
\newcommand{\SchSec}{S^\prime}
\newcommand{\abs}[1]{\lvert#1\rvert}
\newcommand{\st}{\mathrel{}\middle|\mathrel{}}
\newcommand{\mytheoremname}{DefaultName}
\spnewtheorem*{mytheorem}{\mytheoremname}{\bf}{\it}
\newcommand{\grid}[4]{
	
	\draw [->] (0,-0.9*#3) -- (#1*#3 + 1.5*#3, -0.9*#3) node[anchor=#4] {time};
	
	\foreach \time in {0,...,#1}{
		\draw (\time*#3, -0.7*#3) -- (\time*#3, -1.1*#3) node[anchor=north] {\time};
	}
	\foreach \job in {1,...,#2}{
		\node[anchor=east] at (-0.3*#3, 0.5*#3 + #2*#3-\job*#3) {$J_{\job}$};
	}
}
\newcommand{\dueColor}{black!20}
 \journalname{Journal of Scheduling}
\begin{document}

\title{Scheduling a Proportionate Flow Shop of Batching Machines
}


\author{Christoph Hertrich\textsuperscript{1,2}        \and
		Christian Wei\ss\textsuperscript{3}        \and
        Heiner Ackermann\textsuperscript{3}      \and
        Sandy Heydrich\textsuperscript{3}        \and
        Sven O. Krumke\textsuperscript{1}
}

\authorrunning{C. Hertrich, C. Wei\ss, H. Ackermann, S. Heydrich, S. O. Krumke} 

\institute{
	\begin{itemize}
		\item[] Christoph Hertrich\\
		hertrich@math.tu-berlin.de
		\item[] Christian Wei\ss\\
		christian.weiss@itwm.fraunhofer.de
		\item[] Heiner Ackermann\\
		heiner.ackermann@itwm.fraunhofer.de
		\item[] Sandy Heydrich\\
		sandy.heydrich@itwm.fraunhofer.de
		\item[] Sven O. Krumke\\
		krumke@mathematik.uni-kl.de
	\end{itemize}
	\begin{itemize}
		\item[\textsuperscript{1}]
		Technische Universit\"at Kaiserslautern\\
		Department of Mathematics\\
		67663 Kaiserslautern, Germany
		\item[\textsuperscript{2}] 
		Technische Universit\"at Berlin\\
		Institute of Mathematics\\
		10623 Berlin, Germany
		\item[\textsuperscript{3}] 
		Fraunhofer Institute for Industrial Mathematics ITWM\\
		Department of Optimization\\
		67663 Kaiserslautern, Germany
	\end{itemize}
	This is a post-peer-review, pre-copyedit version of an article published in the Journal of Scheduling. The final authenticated version is available online at: \url{https://doi.org/10.1007/s10951-020-00667-2}
}

\date{Received: date / Accepted: date}

\maketitle

\begin{abstract}
In this paper we study a proportionate flow shop of batching machines with release dates and a fixed number $m \geq 2$ of machines. 
The scheduling problem has so far barely received any attention in the literature, but recently its importance has increased significantly, due to applications in the industrial scaling of modern bio-medicine production processes. 
We show that for any fixed number of machines, the makespan and the sum of completion times can be minimized in polynomial time.
Furthermore, we show that the obtained algorithm can also be used to minimize the weighted total completion time, maximum lateness, total tardiness and (weighted) number of late jobs in polynomial time if all release dates are $0$.
Previously, polynomial time algorithms have only been known for two machines.
\keywords{Planning of Pharmaceutical Production \and Proportionate Flow Shop \and Batching Machines \and Permutation Schedules \and Dynamic Programming}
\end{abstract}

\section{Introduction \label{Sec:Introduction}}

Modern medicine can treat some serious illnesses using individualized drugs, which are produced to order for a specific patient and adapted to work only for that unique patient and nobody else. Manufacturing such a drug often involves a complex production line, consisting of many different steps.

If each step of the process is performed manually, by a laboratory worker, then each laboratory worker can only handle materials for one patient at a time.
However, in the industrial application which motivates our study, some steps are instead performed by actual machines, like pipetting robots.
Such high-end machines can typically handle materials for multiple patients in one go. 
If scheduled efficiently, this special feature can drastically increase the throughput of the production line.
Clearly, in such an environment efficient operative planning  is crucial in order to optimize the performance of the manufacturing process and treat as many patients as possible.

In this paper, we present -- as part of a larger study of an industrial application -- new theoretical results for scheduling problems arising from a manufacturing process as described above.
Formally, the manufacturing process we consider is structured in a \emph{flow shop} manner, where each step is handled by a single, dedicated machine. 
A \emph{job} $J_j$, $j=1,2, \ldots, n$, representing the production of a drug for a specific patient, has to be processed by \emph{machines} $M_1,M_2, \ldots, M_m$ in order of their numbering.
Each job $J_j$ has a \emph{release date} $r_j \geq 0$, denoting the time at which the job $J_j$ is available for processing at the first machine $M_1$.
Furthermore, a job is only available for processing at machine $M_i$, $i=2,3, \ldots, m$, when it has finished processing on the previous machine $M_{i-1}$. 
 
Processing times are job-independent, meaning that each machine $M_i$, $i=1,2, \ldots, m$, has a fixed \emph{processing time} $p_i$, which is the same for every job when processed on that machine.
In the literature, a flow shop with machine- or job-independent processing times is sometimes called a \emph{proportionate} flow shop, see e.g.\ \citep{Panwalkar:ProportionateReview}.

Recall that, as a special feature from our application, each machine in the flow shop can handle multiple jobs at the same time. 
These kind of machines are called \emph{(parallel) batching machines} and a set of jobs processed at the same time on some machine is called a \emph{batch} on that machine \citep[Chapter 8]{brucker:scheduling}.
All jobs in one batch $B^{(i)}_k$ on some machine $M_i$ have to start processing on $M_i$ at the same time.
In particular, all jobs in $B^{(i)}_k$ have to be available for processing on $M_i$, before $B^{(i)}_k$ can be started.
The processing time of a batch on $M_i$ remains $p_i$, no matter how many jobs are included in this batch. This distinguishes parallel batching machines from \emph{serial batching machines}, where the processing time of a batch is calculated as the sum of the processing times of the jobs contained in it plus an additional setup time.
Each machine $M_i$, $i=1,2, \ldots, m$ has a \emph{maximum batch size} $b_i$, which is the maximum number of jobs a batch on machine $M_i$ may contain.

Given a feasible schedule $S$, we denote by $c_{ij}(S)$ the \emph{completion time} of job $J_j$ on machine $M_i$. For the completion time of job $J_j$ on the last machine we also write $C_j(S)=c_{mj}(S)$. If there is no confusion which schedule is considered, we may omit the reference to the schedule and simply write $c_{ij}$ and $C_j$.

As optimization criteria, in this paper we are interested in objective functions of the forms
\begin{eqnarray}
 f_{\max} & = &  \max_{j=1,2,\ldots,n} f_j(C_j), \label{fmax} \\
 f_{\sum} & = & \sum_{j=1}^n f_j(C_j), \label{fsum}
\end{eqnarray}
with functions $f_j$ non-decreasing for $j=1,2,\ldots,n$.

In particular, the first part of the paper will be dedicated to the minimization of
the \emph{makespan} $C_{\max} = \max \left\{C_j \mid j= 1,2, \ldots, n \right\}$
and of the \emph{total completion time} $\sum C_j = \sum_{j=1}^n C_j$,
although the obtained algorithms will work for an arbitrary objective function
of type (\ref{fmax}) or (\ref{fsum}), if certain pre-conditions are met.

In the second part of the paper, we assume that for each job 
$J_j$ a \emph{weight} $w_j$ or a 
\emph{due date} $d_j$ is given.
We focus on the following traditional scheduling objectives:
\begin{itemize}
	\item the \emph{weighted total completion time} \\
	$\sum w_j C_j = \sum_{j=1}^n w_j C_j$,
	\item the \emph{maximum lateness} \\
	$L_{\max} = \max \left\{C_j - d_j \mid j = 1, 2, \ldots, n \right\}$, 
	\item the \emph{total tardiness} \\
	$\sum T_j = \sum_{j=1}^n T_j$,
	\item the \emph{number of late jobs} \\
	$\sum U_j = \sum_{j=1}^n U_j$,
	\item the \emph{weighted number of late jobs} \\
	$\sum w_j U_j = \sum_{j=1}^n w_j U_j$,
\end{itemize}
where
$$
T_j = \left\{
\begin{array}{ll}
C_j - d_j, & \text{if } C_j > d_j, \\
0, & \text{otherwise},
\end{array}
\right.$$
and 
$$
U_j = \left\{
\begin{array}{ll}
1, & \text{if } C_j > d_j, \\
0, & \text{otherwise.}
\end{array}
\right.
$$
Note that all these objective functions are \emph{regular}, that is, nondecreasing in each job completion time $C_j$.

Using the standard three-field notation for scheduling problems \citep{GrahamEtAl:ThreeFieldNot},
our problem is denoted as
$$Fm \mid r_j, p_{ij} = p_i, p\text{-batch}, b_i \mid f,$$
where $f$ is a function of the form defined above. We refer to the described scheduling model as \emph{proportionate flow shop of batching machines} and abbreviate it by \emph{PFB}.

Next, we provide an example in order to illustrate the problem setting.

\begin{example}\label{FirstExample}
	Consider a PFB instance with $m=2$ machines, $n=5$ jobs, maximum batch sizes $b_1=3$ and $b_2=4$, processing times $p_1=2$ and $p_2=3$, and release dates $r_1=r_2=0$, $r_3=r_4=1$, and $r_5=2$. Figure \ref{Fig:FirstExampleA} illustrates a feasible schedule for the instance as job-oriented Gantt chart.
	
	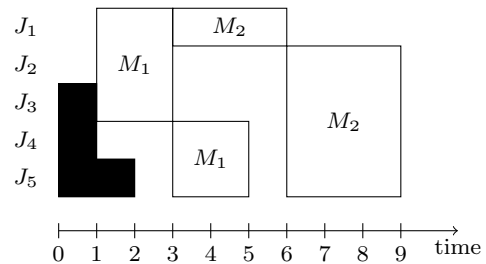
\begin{figure}[ht]
		\centering
		\begin{tikzpicture}
		\newcommand{\step}{0.5}
		\grid{9}{5}{\step}{north}
		\draw [fill=black] (0, \step) rectangle (\step, 3*\step);
		\draw [fill=black] (0, 0) rectangle (2*\step, \step);
		\draw [] (\step, 2*\step) rectangle (3*\step, 5*\step)  node [pos=0.5]{$M_1$};
		\draw [] (3*\step, 0) rectangle (5*\step, 2*\step) node [pos=0.5]{$M_1$};
		\draw [] (3*\step, 4*\step) rectangle (6*\step, 5*\step) node [pos=0.5]{$M_2$};
		\draw [] (6*\step, 0) rectangle (9*\step, 4*\step) node [pos=0.5]{$M_2$};
		\end{tikzpicture}
		\caption{A feasible example schedule.}
		\label{Fig:FirstExampleA}
	\end{figure}
	
	Each rectangle labeled by a machine represents a batch of jobs processed together on this machine. The black area indicates that the respective jobs have not been released at this time yet. Note that in this example none of the batches can be started earlier, since either a job of the batch has just arrived when the batch is started, or the machine is occupied before. Still, the schedule does not minimize the makespan, since the schedule shown in Figure \ref{Fig:FirstExampleB} is feasible as well and has a makespan of 8 instead of 9.
	
	\begin{figure}[ht]
		\centering
		\begin{tikzpicture}
		\newcommand{\step}{0.5}
		\grid{9}{5}{\step}{north}
		\draw [fill=black] (0, \step) rectangle (\step, 3*\step);
		\draw [fill=black] (0, 0) rectangle (2*\step, \step);
		\draw [] (0, 3*\step) rectangle (2*\step, 5*\step) node [pos=0.5]{$M_1$};
		\draw [] (2*\step, 0) rectangle (4*\step, 3*\step) node [pos=0.5]{$M_1$};
		\draw [] (2*\step, 4*\step) rectangle (5*\step, 5*\step) node [pos=0.5]{$M_2$};
		\draw [] (5*\step, 0) rectangle (8*\step, 4*\step) node [pos=0.5]{$M_2$};
		\end{tikzpicture}
		\caption{An example schedule minimizing the makespan.}
		\label{Fig:FirstExampleB}
	\end{figure}
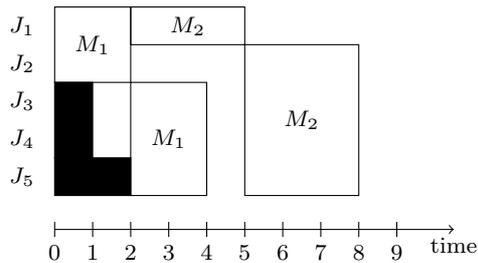
	
	The improvement in the makespan was achieved by reducing the size of the first batch on $M_1$ from three to two, which allows to start it one time step earlier. Observe that no job can finish $M_2$ before time step 5. Moreover, not all five jobs fit into one batch on $M_2$. Hence, at least two batches are necessary on $M_2$ in any schedule. Therefore, 8 is the optimal makespan and no further improvement is possible.
\end{example}

The rest of the paper is structured as follows. The remaining parts of this section provide an overview of the related literature and our results, respectively. In Section \ref{Sec:Perm} we prove that permutation schedules with jobs in earliest release date order are optimal for PFBs with the makespan and total completion time objectives. We also show that permutation schedules are not optimal for any other traditional scheduling objective functions. In Section \ref{Sec:Algorithm} we construct a dynamic program to find the optimal permutation schedule with a given, fixed job order. We show that, if the number of machines $m$ is fixed, the dynamic program can be used to minimize the makespan or total completion time in a PFB in polynomial time. In Section \ref{Sec:equalReleaseDates} we consider PFBs where all release dates are equal. We show that, in this special case, permutation schedules are always optimal and that the dynamic program from Section \ref{Sec:Algorithm} can be applied to solve most traditional scheduling objectives. Finally, in Section \ref{Sec:Conclusion} we draw conclusions, reflect on our results in the context of some additional existing literature, and point out future research directions.

\subsection{Literature}

In this overview as well as the remainder of this paper, we are exclusively concerned with flow shops of parallel batching machines, which we will simply call batching machines. For complexity results about (proportionate) flow shops with serial batching machines, we refer e.g.\ to \cite{Ng2007:SerialBatchingFlowShop} and \cite{Brucker2011:PolyTimeSerialBatching}. It is interesting to note that, although the different type of batching machines changes the problem significantly, the methodology used by \citet{Ng2007:SerialBatchingFlowShop} is quite similar to ours: for a fixed job permutation a dynamic program is developed that adds jobs one by one, resulting in polynomial runtime for a fixed number of machines.

The proportionate flow shop problem with batching machines as studied in this paper has so far not received a lot of attention from researchers, although several applications have appeared in the literature. 
In addition to the application from pharmacy described in Section \ref{Sec:Introduction}, \cite{SungEtAl:ProblemReduction} mention a variety of applications in the manufacturing
industry, e.g., in the production of semi-con\-duc\-tors.
Furthermore, there are several papers which study the scheduling sequences of locks for ships along a canal or river, which can in some ways be seen as a generalization of PFBs (see, e.g., \cite{Passchyn2019}). The problem is generally strongly NP-complete, but several special cases for which it can be solved in polynomial time have been identified.

Another stream of recent research focuses on finding practical scheduling techniques for production processes which can be viewed as NP-hard generalizations of the PFB problem. Usually application-specific heu\-ris\-tics are developed and their quality is certified by computational studies. A preliminary study of the specific industrial setup which motivated our research was done in \cite{ackermannGOR}. For studies from different areas of application we refer, for instance, to \citep{Li2015:HybridFlowShopBatchingGenetic, Zhou2016:SerialAndParallelBatchingEvolutionary, Tan2018:FF2IncompatibleWeightedTardiness, Chen2019:QuantumInspired, Li2019:SteelSerialAndParallelBatching}. The problems typically stem from various industries, e.g. electronics, steel, or equipment manufacturing. Features that make these problems difficult include mixtures of parallel and serial batching machines, families of jobs that are incompatible to be batched together, hybrid/flexible flow shop setups, as well as no-wait constraints. 

Despite this multitude of possible applications, for PFBs as introduced in Section \ref{Sec:Introduction} significant hardness results have, to the best of our knowledge, not been achieved at all. Polynomial time algorithms are only known for the special case with no more than two machines and, in most cases, without release dates \citep{Ahmadi:Batching,SungYoon:DynamicArrivalsTwoBPM,SungKim:TwoMachinesDueDates}.
 
The paper by \citet{SungEtAl:ProblemReduction} is, as far as we know, the only work in literature considering a PFB with arbitrarily many machines. They propose a reduction procedure to decrease the size of an instance by eliminating dominated machines. Using this procedure, they develop heuristics to minimize the makespan and the sum of completion times in a PFB and conduct a computational study certifying quality and efficiency of their heuristic approach. However, they do not establish any complexity result.

For the case with $m=2$ machines and no release dates, \citet{Ahmadi:Batching} show that the makespan can be minimized by completely filling all batches on the first machine (other than the last one, if the number of jobs is not a multiple of $b_1$) and completely filling all batches on the second machine (other than the first one, if the number of jobs is not a multiple of $b_2$).
This structural result immediately yields an $\mathcal{O}(n)$ time algorithm.
If release dates do not all vanish, i.e. $r_j >0$ for some $j=1,2, \ldots, n$, then the makespan can be minimized in $\mathcal{O}(n^2)$ time, see \citep{SungYoon:DynamicArrivalsTwoBPM}. The authors use a dynamic programming algorithm in order to find an optimal batching on the first machine. On the second machine, the strategy to completely fill all batches (other than maybe the first one) is still optimal.

For objectives other than the makespan, results are only known if all jobs are released at the same time.
In this case, the total completion time in a two-machine PFB can be minimized in $\mathcal{O}(n^3)$ time, see \citep{Ahmadi:Batching}.
\citet{SungKim:TwoMachinesDueDates} provide dynamic programs to minimize 
$L_{\max}$,
$\sum T_j$,
and
$\sum U_j$ in a two-machine PFB without release dates in polynomial time.

There are two special cases and one generalization of PFB which have been studied relatively extensively in the literature and which are of importance when it comes to comparison with the wider literature.
First, consider the special case of PFB with maximum batch size equal to one on all machines. This is the usual proportionate flow shop problem, which is well-researched and known to be polynomially solvable for many variants. An overview is given in \citep{Panwalkar:ProportionateReview}. In particular, the makespan and the total completion time can be minimized in $O(n \log n + nm)$ time, by ordering jobs according to nondecreasing release dates and scheduling each job as early as possible on each machine. Note that the term $n\log n$ is due to the sorting of the jobs while the term $nm$ arises from computing start and completion times for each operation of each job. 

Second, consider the special case of PFB where the number of machines is fixed to $m=1$, which leaves us with the problem of scheduling a single batching machine with identical (but not necessarily unit) processing times. This problem, too, has been investigated by various authors, see \citep{IkuraGimple:SingleBPM, LeeEtAl:SemiconductorBatching, Baptiste2000:BatchingIdenticalJobs, Condotta2010}. Even with release dates, the problem can be solved in polynomial time for almost all standard scheduling objective functions, except for the weighted total tardiness, for which it is open.
Note, in particular, the reduction procedure from \cite{SungEtAl:ProblemReduction} implies that for the cases where all batch sizes or all processing times are equal, $b_i = \bar{b}$ or $p_i = \bar{p}$ for all $i = 1, 2, \ldots, m$, the problem of scheduling a PFB reduces to scheduling a single batching machine with equal processing times.
Thus, for those two special cases, the results by \cite{Baptiste2000:BatchingIdenticalJobs} imply that scheduling a PFB is polynomially solvable for all objective functions other than the weighted total tardiness.

Third, in contrast to the two special cases from before, consider the generalization of PFB where processing times may be job- as well as machine-dependent. We end up with the usual flow shop problem with batching machines. Observe that this problem is also a generalization of traditional flow shop as well as scheduling a single batching machine.
Therefore, we can immediately deduce that, even without release dates, it is strongly NP-hard to minimize the makespan for any fixed number of machines $m \geq 3$ and to minimize the total completion time for any fixed number of machines $m \geq 2$ \citep{GareyEtAl:ComplexityFlowShop}. Furthermore, minimizing the maximum lateness (or, equivalently, the makespan with release dates) is strongly NP-hard even for the case with $m=1$ and $b_1 = 2$ \citep{BruckerEtAl:SchedulingABatchingMachine}.
Finally, \citet{PottsEtAl:ShopBatching} showed that scheduling a flow shop with batching machines to minimize the makespan, without release dates, is strongly NP-hard, even for $m=2$ machines. Recall that for traditional flow shop with $m=2$ machines a schedule with minimum makespan can be computed in $O(n \log n)$ time \citep{Johnson:TwoMachineFlowShop}.

In conclusion, we see that dropping either the batching or the multi-machine property of PFB leads to easy special cases, while dropping the proportionate property leads to a very hard generalization. Therefore, PFB lies exactly on the borderline between easy and hard problems, which makes it an interesting problem to study from a theoretical perspective, in addition to the practical considerations explained in the introduction.

\subsection{Our results}
Our main focus in this paper is the complexity of PFBs with an arbitrary, but fixed number $m$ of machines. This is motivated by the specific industrial application we consider: while the number of steps needed to produce individualized drugs in a process as described in the beginning is significantly larger than $2$ or $3$ (machine numbers which are often studied separately in the literature), it is not arbitrarily large and it does not change once the process is set up. Thus, it is reasonable to assume that $m$ is fixed and not part of the input.

\begin{table*} \centering%
	\begin{tabular}{lccc}
		\toprule
		Problem & $m=1$ & $m=2$ & fixed $m\geq 3$ \\
		\midrule
		$Fm \mid r_j, p_{ij} = p_i, p\text{-batch}, b_i \mid C_{\max}$ & \citet{IkuraGimple:SingleBPM} & \citet{SungYoon:DynamicArrivalsTwoBPM} & \textbf{Corollary \ref{Cor:FixedM}/\ref{Cor:Makespan}} \\
		$Fm \mid r_j, p_{ij} = p_i, p\text{-batch}, b_i \mid \sum C_j$ & \citet{Ahmadi:Batching} & \textbf{Corollary \ref{Cor:FixedM}} & \textbf{Corollary \ref{Cor:FixedM}} \\
		$Fm \mid p_{ij} = p_i, p\text{-batch}, b_i \mid \sum w_j C_j$ & \citet{Baptiste2000:BatchingIdenticalJobs} & \textbf{Theorem \ref{Thm:NoRelease}} & \textbf{Theorem \ref{Thm:NoRelease}} \\
		$Fm \mid p_{ij} = p_i, p\text{-batch}, b_i \mid L_{\max}$ & \citet{LeeEtAl:SemiconductorBatching} & \citet{SungKim:TwoMachinesDueDates} & \textbf{Theorem \ref{Thm:NoRelease}} \\
		$Fm \mid p_{ij} = p_i, p\text{-batch}, b_i \mid \sum T_j$ & \citet{Baptiste2000:BatchingIdenticalJobs} & \citet{SungKim:TwoMachinesDueDates} & \textbf{Theorem \ref{Thm:NoRelease}} \\
		$Fm \mid p_{ij} = p_i, p\text{-batch}, b_i \mid \sum U_j$ & \citet{LeeEtAl:SemiconductorBatching} & \citet{SungKim:TwoMachinesDueDates} & \textbf{Theorem \ref{Thm:wjUj}} \\
		$Fm \mid p_{ij} = p_i, p\text{-batch}, b_i \mid \sum w_j U_j$ & \citet{Baptiste2000:BatchingIdenticalJobs} & \textbf{Theorem \ref{Thm:wjUj}} & \textbf{Theorem \ref{Thm:wjUj}} \\
		\bottomrule
	\end{tabular}%
	\caption{Polynomially solvable PFB variants. To the best of our knowledge, each entry contains the first reference solving the corresponding problem or a generalization in polynomial time.}\label{Tab:PolySolvablePFBVariants}%
\end{table*}

This paper is, to the best of our knowledge, the first study which provides any complexity results -- positive or negative -- for PFBs with more than $m=2$ machines (see Table  \ref{Tab:PolySolvablePFBVariants} for an overview). As such we close several questions which have remained open for over fifteen years, despite the significant practical importance of PFBs in the area of production lines. 
We show that for PFBs with release dates and a fixed number $m$ of machines, the makespan and the total completion time can be minimized in polynomial time.
For PFBs without release dates, we further show that the weighted total completion time, the maximum lateness, the total tardiness and the (weighted) number of late jobs can be minimized in polynomial time.

Note that, while this paper closes many open complexity questions for PFBs, the proposed algorithm has practical limitations concerning its running time. We discuss this in some more detail also in the conclusions to this paper. Still, some of the structural results proved in this paper, in particular concerning the optimality of permutation schedules proved in Sections \ref{Sec:Perm} and \ref{Sec:equalReleaseDates}, can be of great help for the future design of heuristics or approximation algorithms, which run fast in practice. Indeed, for the specific industrial setup of the study this paper originates from, such practical heuristics have already been considered by \citet{ackermannGOR}.
However, a close investigation of heuristics or approximation algorithms for PFBs in general is beyond the scope of this paper.

\section{Optimality of Permutation Schedules}\label{Sec:Perm}

Our approach to scheduling a PFB relies on the well-known concept of permutation schedules. In a \emph{permutation schedule} the order of the jobs is the same on all machines of the flow shop. This means there exists a permutation $\sigma$ of the job indices such that $c_{i\sigma(1)}\leq c_{i\sigma(2)} \leq \ldots \leq c_{i\sigma(n)}$, for all $i=1,\ldots,m$. Due to job-independent processing times and since a machine can only process one batch at a time, clearly the same then also holds for starting times instead of completion times.
If there exists an optimal schedule $S^*$ which is a permutation schedule with a certain ordering $\sigma^*$ of the jobs, we say that \emph{permutation schedules are optimal}. A job ordering $\sigma^*$ which gives rise to an optimal permutation schedule is then called an \emph{optimal job ordering}.
Often, if the job ordering is clear, we will assume that jobs are already numbered with respect to that ordering and drop the notation of $\sigma$ or $\sigma^*$.
Finally, an ordering $\sigma$ of the jobs is called an \emph{earliest release date ordering}, if $r_{\sigma(1)} \leq r_{\sigma(2)} \leq \ldots \leq r_{\sigma(n)}$.

Suppose that for some problem 
$$Fm \mid r_j, p_{ij} = p_i, p\text{-batch}, b_i \mid f$$
permutation schedules are optimal.
Then the scheduling problem can be split into two parts: 
\begin{itemize}
	\item[(i)] find an optimal ordering $\sigma^*$ of the jobs;
	\item[(ii)] for each machine, partition the job set into batches and order those batches in accordance with the ordering $\sigma^*$, such that the resulting schedule is optimal.
\end{itemize}

In this section we deal with part (i) and show that for $f\in\{C_{\max},\sum C_j\}$ optimal job orderings exist and can be found in $O(n \log n)$ time.
Then, in Section \ref{Sec:Algorithm}, we show that under certain, very general preconditions part (ii) can be done in polynomial time via dynamic programming.

We begin with a technical lemma which will help us to show optimality of permutation schedules in this section and also in Section \ref{Sec:equalReleaseDates}.

\begin{lemma}
	\label{lem:releaseDatePermu}
	Let $S$ be a feasible schedule for a PFB and let $\sigma$ be some earliest release date ordering of the jobs.
	Then there exists a feasible permutation schedule $\hat{S}$ in which the jobs are ordered by $\sigma$ and the multi-set of job completion times in $\hat{S}$ is the same as in $S$.
\end{lemma}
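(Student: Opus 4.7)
The plan is to keep the batch structure and the starting times of $S$ unchanged on every machine, and only reassign the job identities to the slots inside the batches so that the jobs appear in $\sigma$-order on all machines. Concretely, enumerate the batches on $M_i$ in time order as $B_1^{(i)},\dots,B_{k_i}^{(i)}$ with sizes $b_k^{(i)}$ and starting/completion times $s_k^{(i)}$ and $c_k^{(i)}$. Relabel the jobs so that $r_1\le r_2\le\dots\le r_n$, and then, on each machine $M_i$ independently, place jobs $1,\dots,b_1^{(i)}$ into $B_1^{(i)}$, jobs $b_1^{(i)}+1,\dots,b_1^{(i)}+b_2^{(i)}$ into $B_2^{(i)}$, and so on. Call the resulting assignment $\hat S$.

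The equality of the multi-sets of completion times is immediate: in both $S$ and $\hat S$ the batch structure and timing on $M_m$ are identical, and each completion time $c_k^{(m)}$ occurs exactly $b_k^{(m)}$ times. It therefore only remains to verify feasibility of $\hat S$, i.e.\ that release dates and flow-shop precedences are respected.

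For the release dates, note that in $S$ each of the $N_k^{(1)}:=b_1^{(1)}+\dots+b_k^{(1)}$ jobs lying in $B_1^{(1)},\dots,B_k^{(1)}$ must already be released by $s_k^{(1)}$. Hence at time $s_k^{(1)}$ at least $N_k^{(1)}$ jobs are released, so the $N_k^{(1)}$ smallest release dates are all $\le s_k^{(1)}$. In $\hat S$ the batch $B_k^{(1)}$ contains exactly the jobs with the $(N_{k-1}^{(1)}{+}1)$-th through $N_k^{(1)}$-th smallest release dates, which are therefore also $\le s_k^{(1)}$.

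The flow-shop precedences are the main obstacle, and they will be handled by a monotonicity/counting argument. For any time $t$, let $f_i(t)$ denote the number of jobs that have finished $M_i$ by time $t$; this quantity depends only on the batch timings and sizes, not on job identities, so it is the same in $S$ and in $\hat S$. Since $S$ is feasible, $f_i(t)\ge f_{i+1}(t)$ for all $i,t$, because any job that has finished $M_{i+1}$ must first have finished $M_i$. Now fix a job $\ell$ in $\hat S$; it sits in batch $B_{k_i(\ell)}^{(i)}$ on $M_i$, where $k_i(\ell)$ is the smallest $k$ with $N_k^{(i)}\ge\ell$. At the start $s_{k_{i+1}(\ell)}^{(i+1)}$ of its batch on $M_{i+1}$, already $N_{k_{i+1}(\ell)}^{(i+1)}\ge\ell$ jobs must have completed $M_{i+1}$ in $S$ (by the end of that batch), so $f_{i+1}$ evaluated slightly after this start is $\ge\ell$; using $f_i\ge f_{i+1}$ together with right-continuity of the counting functions and the fact that jobs complete $M_{i+1}$ only at batch endpoints, one deduces $f_i(s_{k_{i+1}(\ell)}^{(i+1)})\ge\ell$, i.e.\ at least $\ell$ jobs have completed $M_i$ by this time. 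By definition of $k_i(\ell)$, this gives $c_{k_i(\ell)}^{(i)}\le s_{k_{i+1}(\ell)}^{(i+1)}$, which is exactly the flow-shop constraint for job $\ell$ between $M_i$ and $M_{i+1}$. This finishes the proof.
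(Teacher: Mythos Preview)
Your construction and overall strategy are exactly those of the paper: keep the batch sizes and start/completion times of $S$ on every machine, and refill the batches with jobs in $\sigma$-order. The multi-set argument and the release-date argument are fine and match the paper's.

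However, the verification of the flow-shop precedences has a gap. You introduce the counting functions $f_i(t)$ and the inequality $f_i(t)\ge f_{i+1}(t)$, and then try to deduce $f_i\bigl(s_{k_{i+1}(\ell)}^{(i+1)}\bigr)\ge\ell$ by evaluating $f_{i+1}$ ``slightly after'' the start time and invoking right-continuity. This does not work: at any time strictly before $c_{k_{i+1}(\ell)}^{(i+1)}$ one has $f_{i+1}=N_{k_{i+1}(\ell)-1}^{(i+1)}<\ell$, so the inequality $f_i\ge f_{i+1}$ at (or just after) the start time only gives $f_i\bigl(s_{k_{i+1}(\ell)}^{(i+1)}\bigr)\ge N_{k_{i+1}(\ell)-1}^{(i+1)}$, which is too weak. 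Applying $f_i\ge f_{i+1}$ at the completion time instead yields only $c_{k_i(\ell)}^{(i)}\le c_{k_{i+1}(\ell)}^{(i+1)}$, not the required $c_{k_i(\ell)}^{(i)}\le s_{k_{i+1}(\ell)}^{(i+1)}$.

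The fix is to bypass $f_{i+1}$ entirely and argue directly from feasibility of $S$, as the paper does: in $S$, every job belonging to one of the first $k_{i+1}(\ell)$ batches on $M_{i+1}$ must have completed $M_i$ by the time its batch starts, hence by time $s_{k_{i+1}(\ell)}^{(i+1)}$. There are $N_{k_{i+1}(\ell)}^{(i+1)}\ge\ell$ such jobs, so $f_i\bigl(s_{k_{i+1}(\ell)}^{(i+1)}\bigr)\ge\ell$ directly, and your final step then goes through. Equivalently, the correct comparison is between $f_i$ and the number of jobs \emph{started} on $M_{i+1}$, not completed.
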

\begin{proof}
	Suppose that jobs are indexed according to ordering $\sigma$, i.e.\ $\sigma$ is the identity permutation. Otherwise renumber jobs accordingly. Note that
	since $\sigma$ is an earliest release date ordering, it follows that 
	$r_1\leq r_2 \leq \ldots \leq r_n$.	
	Let $B^{(i)}_\ell$, $\ell=1,\ldots,\kappa^{(i)}$, be the $\ell$-th batch processed on $M_i$ in $S$, where $\kappa^{(i)}$ is the total number of batches used on $M_i$. Let $s^{(i)}_\ell$ and $c^{(i)}_\ell$ be the start and completion times of batch
	$B^{(i)}_\ell$.
	Furthermore, let $k_\ell^{(i)}=\abs{B^{(i)}_\ell}$ be the size of batch $B^{(i)}_\ell$ and let 
	$$j^{(i)}_\ell = \sum_{s = 1}^{\ell} k^{(i)}_s$$
	be the the number of jobs
	processed in batches $B^{(i)}_1$, $B^{(i)}_2, \ldots, B^{(i)}_\ell$.
	
	Construct the new schedule $\hat{S}$ using batches $\hat{B}^{(i)}_\ell$, $i = 1,2, \ldots, m$, $\ell = 1,2,
	\ldots, \kappa^{(i)}$, with start times $\hat{s}^{(i)}_\ell$ and completion times $\hat{c}^{(i)}_\ell$ given by
	\begin{eqnarray*}
		\hat{B}^{(i)}_\ell & = & \{ J_{j^{(i)}_{\ell-1} + 1}, J_{j^{(i)}_{\ell-1} + 2}, \ldots, J_{j^{(i)}_{\ell}}\},\\
		\hat{s}^{(i)}_\ell & = & s^{(i)}_\ell, \\
		\hat{c}^{(i)}_\ell & = & c^{(i)}_\ell.
	\end{eqnarray*}
	
	In other words, in schedule $\hat{S}$ we use batches with the same start and completion times and of the same size as in $S$. 
	We only reassign which job is processed in which batch in such a way, that
	the first batch on machine $M_i$ processes the first $k^{(i)}_1$ jobs (in order of their numbering, i.e.\ in the earliest release date order),
	the second batch processes jobs $J_{k^{(i)}_1+1}$ to $J_{k^{(i)}_1+k^{(i)}_2}$,
	the third batch processes jobs  $J_{k^{(i)}_1 + k^{(i)}_2+1}$ to $J_{k^{(i)}_1 + k^{(i)}_2 + k^{(i)}_3}$  and so on.
	Such, in schedule $\hat{S}$, jobs are processed in order of their indices on all machines (i.e.\ in order $\sigma$).
	
	Clearly, as all batches have exactly the same start and completion times, as well as the same sizes as in $S$, for schedule $\hat{S}$ the multi-set of job completion times on the last machine is exactly the same as for $S$. 
	
	It remains to show that $\hat{S}$ is feasible, i.e., no job is started on machine $M_1$ before its
	release date and no job is started on a machine $M_i$, $i = 2, 3, \ldots, m$ before finishing processing on machine $M_{i-1}$.
	
	We first show that no batch $\hat{B}^{(1)}_\ell$ on machine $M_1$ violates any release dates.
	Indeed, due to the feasibility of $S$, at the time $s^{(1)}_\ell$, when batch $B^{(1)}_\ell$ is started, at least $j_\ell^{(1)}$ jobs are released (recall that $j_\ell^{(1)}$ is the number of jobs
	processed in batches $B^{(1)}_1, B^{(1)}_2, \ldots, B^{(1)}_\ell$).
	This means that in particular the first $j_\ell^{(1)}$ jobs in earliest release date order are released at time
	$s^{(1)}_\ell$. 
	Thus, since batch $\hat{B}^{(1)}_\ell$ by construction contains only the jobs
	$J_{j_{\ell-1}^{(1)} +1}$ to $J_{j_\ell^{(1)}}$ and since jobs are numbered in earliest release date order, batch $\hat{B}^{(1)}_\ell$ does not violate any release dates.
	
	Finally, using an analogous argument, we show that no batch starts processing on machine $M_i$ before all jobs it contains finish processing on machine $M_{i-1}$, $i = 2, 3, \ldots, m$.
	By time $s^{(i)}_\ell$, when batch $B^{(i)}_\ell$ is started on machine $M_i$ in schedule $S$, machine $M_{i-1}$ has processed at least
	$j_\ell^{(i)}$ jobs.
	Thus, by construction, the same is true for batch $\hat{B}^{(i)}_\ell$ in schedule $\hat{S}$.
	In particular, this means that at time $s^{(i)}_\ell$ the first $j_\ell^{(i)}$ jobs in order of their numbering are finished on machine $M_{i-1}$ in schedule $\hat{S}$.
	Since batch $\hat{B}^{(i)}_\ell$ contains only the jobs
	$J_{j_{\ell-1}^{(i)} +1}$ to $J_{j_\ell^{(i)}}$,
	this means that batch $\hat{B}^{(i)}_\ell$ starts only after each job contained in it has finished processing on machine $M_{i-1}$.
	Therefore, schedule $\hat{S}$ is feasible.\qed
\end{proof}
Note that, in particular, if all release dates are equal, then any ordering $\sigma$ is an earliest release date ordering of the jobs. This fact will be used in Section \ref{Sec:equalReleaseDates}.

Now we show that for a PFB with the makespan or total sum of completion times objective, permutation schedules are optimal and that any earliest release date ordering is an optimal ordering of the jobs. 
This result generalizes \citep[Proposition 1]{SungYoon:DynamicArrivalsTwoBPM} to arbitrarily many machines. For makespan and total completion time, it also extends \citep[Lemma 1]{SungEtAl:ProblemReduction} to the case with release dates.

\begin{theorem}\label{Thm:PermI}
	For a PFB with objective function $C_{\max}$ or $\sum C_j$, permutation schedules are optimal. Moreover, any earliest release date ordering is an optimal ordering of the jobs.
\end{theorem}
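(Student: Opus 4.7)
The plan is to derive the theorem as an essentially immediate corollary of Lemma \ref{lem:releaseDatePermu}. The key observation is that both objective functions in question, $C_{\max} = \max_j C_j$ and $\sum C_j$, depend only on the multi-set $\{C_1,\dots,C_n\}$ of job completion times on the last machine, and not on which particular job receives which completion time. Lemma \ref{lem:releaseDatePermu} preserves precisely this multi-set while converting an arbitrary feasible schedule into a permutation schedule whose job ordering is any prescribed earliest release date ordering.

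Concretely, I would start from an optimal schedule $S^*$ for the given objective $f \in \{C_{\max}, \sum C_j\}$ (its existence is clear, since the set of potentially relevant schedules can be restricted to a finite one by standard arguments). I would then fix an arbitrary earliest release date ordering $\sigma$ of the jobs, which can be computed in $O(n\log n)$ time by sorting. Applying Lemma \ref{lem:releaseDatePermu} to $S^*$ and $\sigma$ yields a feasible permutation schedule $\hat{S}^*$ in which the jobs appear in the order $\sigma$ on every machine, and such that the multi-set of completion times $\{C_j(\hat{S}^*)\}_{j=1}^n$ coincides with $\{C_j(S^*)\}_{j=1}^n$.

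Since $C_{\max}$ and $\sum C_j$ are symmetric functions of the completion times, we obtain $C_{\max}(\hat{S}^*) = C_{\max}(S^*)$ and $\sum_j C_j(\hat{S}^*) = \sum_j C_j(S^*)$. Thus $\hat{S}^*$ attains the same objective value as $S^*$, hence is itself optimal. This shows simultaneously that permutation schedules are optimal and that any earliest release date ordering is an optimal job ordering.

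I do not anticipate a real obstacle: all the work has already been done in the construction of $\hat S$ in Lemma \ref{lem:releaseDatePermu}. The only care needed is to point out explicitly that $C_{\max}$ and $\sum C_j$ are invariant under permutations of the completion time vector, which is why preservation of the multi-set is sufficient. It is worth noting in passing that the argument would not go through verbatim for objectives such as $\sum w_j C_j$ or $L_{\max}$ with non-trivial weights or due dates, since those are not symmetric in the completion times; this is consistent with the later discussion in Section \ref{Sec:equalReleaseDates} where the $\sigma$ freedom is exploited only when release dates are equal.
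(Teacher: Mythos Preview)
Your proposal is correct and follows essentially the same argument as the paper: apply Lemma~\ref{lem:releaseDatePermu} to an optimal schedule to obtain a permutation schedule in earliest release date order with the same multi-set of completion times, and then observe that $C_{\max}$ and $\sum C_j$ depend only on this multi-set. Your additional remarks on the $O(n\log n)$ sorting cost and on why the argument fails for non-symmetric objectives are accurate and consistent with the paper's surrounding discussion.
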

\begin{proof}
	Let $S^*$ be an optimal schedule with respect to makespan or total completion time. Let $\sigma$ be any earliest release date ordering. Using Lemma $\ref{lem:releaseDatePermu}$, construct a new permutation schedule $\hat{S}$ with jobs ordered by $\sigma$ on all machines and with the same multi-set of completion times as $S$.
	Since makespan and total completion time only depend on the multi-set of completion times, $\hat{S}$ is optimal, too.\qed
\end{proof}

Hence, when minimizing the makespan or the total completion time in a PFB, it is valid to limit oneself to permutation schedules in an earliest release date order.

To conclude this section, we present an example where no permutation schedule is optimal for a bunch of other objective functions. This shows that Theorem~\ref{Thm:PermI} does not hold for these objective functions. This also implies that the statements made in \citep[Proposition 1]{SungYoon:DynamicArrivalsTwoBPM}, \citep[Lemma 1]{SungEtAl:ProblemReduction}, and \citep[Lemma 1]{SungKim:TwoMachinesDueDates} cannot be generalized to settings in which release dates and due dates or release dates and weights are present.

\begin{example}\label{Exa:NonPermSchedule}
Consider a PFB with $m=3$ machines, $b_1=p_1=b_3=p_3=1$, and $b_2=p_2=2$. Suppose there are $n=2$ jobs with release dates $r_1=0$ and $r_2=1$. Let due dates $d_1=6$ and $d_2=5$ be given. First, we show that no permutation schedule is optimal for $L_{\max}$. Figure \ref{Fig:NonPermSchedule} shows a job-oriented Gantt chart of a feasible schedule which is not a permutation schedule. Each rectangle labeled by a machine denotes a batch processed on this machine. The black box indicates that the second job has not been released yet. The gray shaded area denotes points in time after the due dates.

\begin{figure}[ht]
	\centering
	\begin{tikzpicture}
	\newcommand{\step}{0.5}
	\grid{7}{2}{\step}{north}
	\draw [fill=black](0*\step, 0*\step) rectangle (1*\step, 1*\step);
	\draw [color=\dueColor, fill=\dueColor] (6*\step, 1*\step) rectangle (8*\step, 2*\step);
	\draw [color=\dueColor, fill=\dueColor] (5*\step, 0*\step) rectangle (8*\step, 1*\step);
	\draw (0*\step, 1*\step) rectangle (1*\step, 2*\step) node [pos=0.5]{$M_1$};
	\draw (1*\step, 0*\step) rectangle (2*\step, 1*\step) node [pos=0.5]{$M_1$};
	\draw (2*\step, 0*\step) rectangle (4*\step, 2*\step) node [pos=0.5]{$M_2$};
	\draw (4*\step, 0*\step) rectangle (5*\step, 1*\step) node [pos=0.5]{$M_3$};
	\draw (5*\step, 1*\step) rectangle (6*\step, 2*\step) node [pos=0.5]{$M_3$};
	\end{tikzpicture}
	\caption{A non-permutation schedule for the instance of Example \ref{Exa:NonPermSchedule}.}
	\label{Fig:NonPermSchedule}
\end{figure}
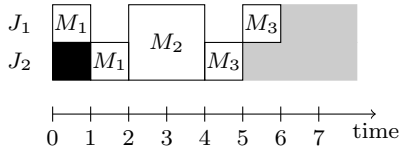

	When constructing a permutation schedule for this instance, two decisions have to be made, namely which of the two jobs is processed first, and whether a batch of size two or two batches of size one should be used on $M_2$. The resulting four possible schedules are depicted in Figure \ref{Fig:SuboptimalPermSchedules}.

\begin{figure}[ht]
	\centering
		\begin{tikzpicture}
		\newcommand{\step}{0.5}
		\grid{7}{2}{\step}{north}
		\draw [fill=black] (0*\step, 0*\step) rectangle (1*\step, 1*\step);
		\draw [color=\dueColor, fill=\dueColor] (6*\step, 1*\step) rectangle (8*\step, 2*\step);
		\draw [color=\dueColor, fill=\dueColor] (5*\step, 0*\step) rectangle (8*\step, 1*\step);
		\draw (0*\step, 1*\step) rectangle (1*\step, 2*\step) node [pos=0.5]{$M_1$};
		\draw (1*\step, 0*\step) rectangle (2*\step, 1*\step) node [pos=0.5]{$M_1$};
		\draw (2*\step, 0*\step) rectangle (4*\step, 2*\step) node [pos=0.5]{$M_2$};
		\draw (4*\step, 1*\step) rectangle (5*\step, 2*\step) node [pos=0.5]{$M_3$};
		\draw (5*\step, 0*\step) rectangle (6*\step, 1*\step) node [pos=0.5]{$M_3$};
		\end{tikzpicture}
		\vspace{1em}\\
		\begin{tikzpicture}
		\newcommand{\step}{0.5}
		\grid{7}{2}{\step}{north}
		\draw [fill=black] (0*\step, 0*\step) rectangle (1*\step, 1*\step);
		\draw [color=\dueColor, fill=\dueColor] (6*\step, 1*\step) rectangle (8*\step, 2*\step);
		\draw [color=\dueColor, fill=\dueColor] (5*\step, 0*\step) rectangle (8*\step, 1*\step);
		\draw (0*\step, 1*\step) rectangle (1*\step, 2*\step) node [pos=0.5]{$M_1$};
		\draw (1*\step, 0*\step) rectangle (2*\step, 1*\step) node [pos=0.5]{$M_1$};
		\draw (1*\step, 1*\step) rectangle (3*\step, 2*\step) node [pos=0.5]{$M_2$};
		\draw (3*\step, 0*\step) rectangle (5*\step, 1*\step) node [pos=0.5]{$M_2$};
		\draw (3*\step, 1*\step) rectangle (4*\step, 2*\step) node [pos=0.5]{$M_3$};
		\draw (5*\step, 0*\step) rectangle (6*\step, 1*\step) node [pos=0.5]{$M_3$};
		\end{tikzpicture}
		\vspace{1em}\\
		\begin{tikzpicture}
		\newcommand{\step}{0.5}
		\grid{7}{2}{\step}{north}
		\draw [fill=black] (0*\step, 0*\step) rectangle (1*\step, 1*\step);
		\draw [color=\dueColor, fill=\dueColor] (6*\step, 1*\step) rectangle (8*\step, 2*\step);
		\draw [color=\dueColor, fill=\dueColor] (5*\step, 0*\step) rectangle (8*\step, 1*\step);
		\draw (2*\step, 1*\step) rectangle (3*\step, 2*\step) node [pos=0.5]{$M_1$};
		\draw (1*\step, 0*\step) rectangle (2*\step, 1*\step) node [pos=0.5]{$M_1$};
		\draw (3*\step, 0*\step) rectangle (5*\step, 2*\step) node [pos=0.5]{$M_2$};
		\draw (6*\step, 1*\step) rectangle (7*\step, 2*\step) node [pos=0.5]{$M_3$};
		\draw (5*\step, 0*\step) rectangle (6*\step, 1*\step) node [pos=0.5]{$M_3$};
		\end{tikzpicture}
		\vspace{1em}\\
		\begin{tikzpicture}
		\newcommand{\step}{0.5}
		\grid{7}{2}{\step}{north}
		\draw [fill=black] (0*\step, 0*\step) rectangle (1*\step, 1*\step);
		\draw [color=\dueColor, fill=\dueColor] (6*\step, 1*\step) rectangle (8*\step, 2*\step);
		\draw [color=\dueColor, fill=\dueColor] (5*\step, 0*\step) rectangle (8*\step, 1*\step);
		\draw (2*\step, 1*\step) rectangle (3*\step, 2*\step) node [pos=0.5]{$M_1$};
		\draw (1*\step, 0*\step) rectangle (2*\step, 1*\step) node [pos=0.5]{$M_1$};
		\draw (4*\step, 1*\step) rectangle (6*\step, 2*\step) node [pos=0.5]{$M_2$};
		\draw (2*\step, 0*\step) rectangle (4*\step, 1*\step) node [pos=0.5]{$M_2$};
		\draw (6*\step, 1*\step) rectangle (7*\step, 2*\step) node [pos=0.5]{$M_3$};
		\draw (4*\step, 0*\step) rectangle (5*\step, 1*\step) node [pos=0.5]{$M_3$};
		\end{tikzpicture}
	\caption{All possible permutation schedules without unnecessary idle time for the instance of Example \ref{Exa:NonPermSchedule}.}
	\label{Fig:SuboptimalPermSchedules}
\end{figure}
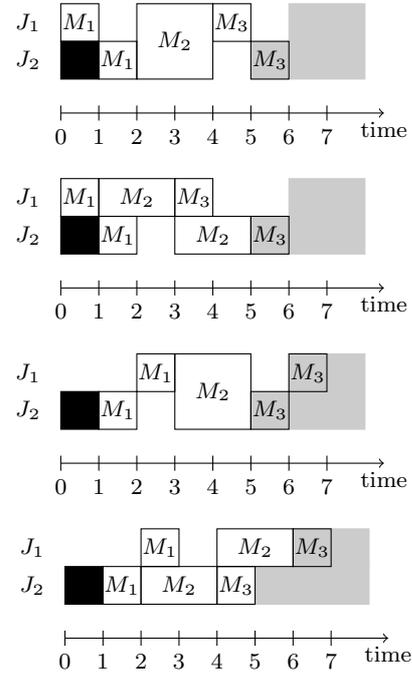

However, the schedule in Figure \ref{Fig:NonPermSchedule} achieves an objective value of zero, while all schedules in Figure \ref{Fig:SuboptimalPermSchedules} have a positive objective value. Hence, there is no optimal permutation schedule for this instance. The same example shows that there is no optimal permutation schedule for $\sum T_j$ and $\sum U_j$.

Choosing weights $w_1=1$, $w_2=3$ leads to an instance where no permutation schedule is optimal for $\sum w_jC_j$, since the schedule in Figure \ref{Fig:NonPermSchedule} achieves an objective value of $6\cdot1+5\cdot3=21$, while all schedules in Figure \ref{Fig:SuboptimalPermSchedules} have an objective value of at least $22$.

\end{example}

\section{Dynamic Programming to Find an Optimal Schedule for a Given Job Permutation \label{Sec:Algorithm}} 

In this section we show that for a fixed number $m$ of machines and a fixed ordering of the jobs $\sigma$, we can construct a polynomial time dynamic program, which finds a permutation schedule that is optimal among all schedules obeying job ordering $\sigma$. The algorithm works for an arbitrary regular sum or bottleneck objective function. Combined with Theorem \ref{Thm:PermI} this shows that the makespan and the total completion time in a PFB can be minimized in polynomial time for fixed $m$.

The dynamic program is based on the important observation that, for a given machine $M_i$, the set of possible job completion times on $M_i$ is not too large. In order to formalize this statement, we introduce the notion of a schedule being $\Gamma$-active. For ease of notation, we use the shorthand $[k]=\{1,2,\ldots,k\}$.
\begin{definition}\label{Def:GammaActive}
	Given an instance of PFB with $n$ jobs and $m$ machines, let
	\begin{align*}
	\Gamma_i=\left\{r_{j'} + \sum_{i'=1}^{i}\lambda_{i'}p_{i'}\st j'\in[n],\ \lambda_{i'}\in[n]\text{ for } i'\in[i] \right\}
	\end{align*}
	for all $i \in \left[m\right]$.
	We say that a schedule $S$ is \emph{$\Gamma$-active}, if $c_{ij}(S) \in \Gamma_i$, for any job $J_j$, $j \in \left[n\right]$, and any machine $M_i$, $i \in \left[m\right]$.		
\end{definition}
Note that  on machine $M_i$, there are only $\abs{\Gamma_i}\leq n^{i+1}\leq n^{m+1}$ possible job completion times to consider for any $\Gamma$-active schedule $S$.

Now we show that for a PFB problem with a regular objective function, any schedule can be transformed into a $\Gamma$-active schedule, without increasing the objective value. To do so, we prove that this is true even for a slightly stronger concept than $\Gamma$-active.
\begin{definition}\label{Def:BatchActive}
	A schedule is called \emph{batch-active}, if no batch can be started earlier without violating feasibility of the schedule and without changing the order of batches.
\end{definition}

Clearly, given a schedule $S$ that is not batch-active, by successively removing unnecessary idle times we can obtain a new, batch-active schedule $S^\prime$. Furthermore, for regular objective functions, $S^\prime$ has objective value no higher than the original schedule $S$. These two observations immediately yield the following lemma.

\begin{lemma}\label{Lem:BatchActive}
	A schedule for a PFB can always be transformed into a batch-active schedule such that any regular objective function is not increased. This transformation does not change the order in which the jobs are processed on the machines.\qed
\end{lemma}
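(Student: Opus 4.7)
The plan is to elaborate on the ``remove unnecessary idle times'' idea sketched by the authors. Given a schedule $S$, keep (i) the assignment of jobs to batches on each machine and (ii) the order in which batches are processed on each machine exactly as in $S$. What we modify are only the start times of the batches. Process the batches of $S$ one by one in nondecreasing order of their original start times (breaking ties arbitrarily), and for the current batch set its new start time to the earliest value that is feasible given the already-fixed new start times, namely the maximum of (a) the release dates of the jobs it contains if it sits on $M_1$, (b) the new completion times on $M_{i-1}$ of the jobs it contains if it sits on $M_i$ with $i\geq 2$, and (c) the new completion time of its predecessor batch on the same machine. Call the resulting schedule $S'$.

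I would then check termination and batch-activeness: since the number of batches is finite and each is handled exactly once, the procedure terminates. By the very definition of the shift, after it has been processed no batch can be started any earlier without colliding with one of (a), (b), (c), which would violate feasibility or reorder batches. Hence $S'$ is batch-active. Feasibility of $S'$ follows from the same defining rule: each of the three constraints is respected by construction, and since we never swap two batches, the job order on every machine coincides with the order in $S$.

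The remaining point is that the objective value does not increase. I would prove by induction, in the processing order of batches, that every batch in $S'$ starts no later than in $S$. The base case is immediate because the first batch can only be shifted earlier. For the inductive step, the three quantities (a), (b), (c) defining the new start time of a batch are all bounded above by the corresponding quantities in $S$ (release dates are unchanged, and by the induction hypothesis the relevant previous-batch and previous-machine completion times in $S'$ are no larger than in $S$), so the new start time is no larger than the old one. Consequently $C_j(S')\leq C_j(S)$ for every job $J_j$, and regularity of the objective function yields $f(S')\leq f(S)$.

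I do not anticipate a serious obstacle. The only subtle point is the choice of processing order for the greedy left-shift: sorting by original start times ensures that whenever we shift a batch, all batches whose new positions it depends on (its predecessor on the same machine, and the batches on $M_{i-1}$ processing its jobs) have already been assigned new start times, making the recursion well-defined and making the induction in the last paragraph work cleanly.
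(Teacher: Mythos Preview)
Your argument is correct and follows exactly the approach the paper indicates: the paper itself supplies no formal proof beyond the remark preceding the lemma that one can ``successively remove unnecessary idle times'', and your greedy left-shift in order of original start times is a clean, rigorous implementation of that remark. The only cosmetic point is that ``breaking ties arbitrarily'' should really be ``breaking ties by increasing machine index'' so that a batch on $M_{i-1}$ is handled before a dependent batch on $M_i$ with the same original start time, but under the standard assumption of strictly positive processing times no such ties occur.
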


Now we show that, indeed, being batch-active is stronger than being $\Gamma$-active, in other words that every batch-active schedule is also $\Gamma$-active. This result generalizes an observation made by \citet[Section~2.1]{Baptiste2000:BatchingIdenticalJobs} from a single machine to the flow shop setting.

\begin{lemma}\label{Lem:ActiveComplTimes}	
	In a PFB, any batch-active schedule is also $\Gamma$-active.
\end{lemma}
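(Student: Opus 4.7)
The plan is to prove by induction on the pair $(i,k)$ in lexicographic order, where $B^{(i)}_k$ denotes the $k$-th batch on machine $M_i$, that the common completion time $c^{(i)}_k$ of $B^{(i)}_k$ lies in $\Gamma_i$. To carry the induction through I strengthen the hypothesis: the coefficient $\lambda_i$ in the chosen representation $c^{(i)}_k = r_{j'} + \sum_{i'=1}^i \lambda_{i'} p_{i'}$ can be taken to satisfy $\lambda_i \leq k$.

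The first step is to pin down the start time $s^{(i)}_k$ of an arbitrary batch in a batch-active schedule. The feasibility constraints on $s^{(i)}_k$ are that $s^{(i)}_k \geq c^{(i)}_{k-1}$ whenever $k \geq 2$, and $s^{(i)}_k \geq q^{(i)}_j$ for every $J_j \in B^{(i)}_k$, where $q^{(1)}_j := r_j$ and $q^{(i)}_j := c_{i-1,j}$ for $i \geq 2$. Because no earlier start is possible without either changing the batch order or violating an arrival constraint, $s^{(i)}_k$ must equal the maximum of these lower bounds.

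The inductive step then splits into two cases depending on which lower bound is tight. If the maximum is attained at some $q^{(i)}_j$, then for $i = 1$ we have $c^{(1)}_k = r_j + p_1 \in \Gamma_1$ with $\lambda_1 = 1$, while for $i \geq 2$ the induction hypothesis at $(i-1,\cdot)$ gives $c_{i-1,j} \in \Gamma_{i-1}$ in the desired form, so $c^{(i)}_k = c_{i-1,j} + p_i$ lies in $\Gamma_i$ with new last coefficient $\lambda_i = 1$. If instead $k \geq 2$ and the maximum is attained at $c^{(i)}_{k-1}$, the induction hypothesis at $(i, k-1)$ supplies a representation with $\lambda_i \leq k-1$, and adding $p_i$ increases this coefficient to at most $k$.

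The main obstacle is the requirement $\lambda_{i'} \in [n] = \{1, \ldots, n\}$ built into the definition of $\Gamma_i$. This is precisely what the strengthened hypothesis $\lambda_i \leq k$ is designed for: since any schedule of $n$ jobs uses at most $n$ batches per machine, we have $k \leq n$ throughout the induction, so the propagated bound $\lambda_i \leq k \leq n$ keeps every coefficient within $[n]$. Once the structural description of $s^{(i)}_k$ is in hand, the rest of the argument is routine bookkeeping.
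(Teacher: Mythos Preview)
Your proof is correct and follows essentially the same approach as the paper's. The paper argues more tersely---fixing a batch and tracing back through the two cases you identify, then noting that the ``previous batch on the same machine'' case can occur at most $n-1$ times in a row---whereas you make the induction on $(i,k)$ and the strengthened bound $\lambda_i \leq k$ explicit, but the underlying idea is identical.
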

\begin{proof}
	Fix some $i\in[m]$ and $j\in[n]$. Let $B^{(i)}_\ell$ be the batch which contains job $J_j$ on machine $M_i$. Since the schedule is batch-active, $B^{(i)}_\ell$ is either started at the completion time of the previous batch $B^{(i)}_{\ell-1}$ on the same machine or as soon as all jobs of $B^{(i)}_\ell$ are available on machine $M_i$. In the former case, all jobs $J_{j'}\in B^{(i)}_{\ell-1}$ satisfy $c_{ij} = c_{ij'}+p_i$. In the latter case, there is a job $J_{j''}\in B^{(i)}_\ell$ such that $c_{ij}=c_{(i-1)j''}+p_i$, where we write $c_{0j''}=r_{j''}$ for convenience. The claim follows inductively by observing that the former case can happen at most $n-1$ times in a row, since there are at most $n$ batches on each machine.\qed
\end{proof}

Together, Lemmas \ref{Lem:BatchActive} and \ref{Lem:ActiveComplTimes} imply the following desired property.
\begin{lemma} \label{Lem:optSchedGammaI}
	Let $S$ be a schedule for a PFB problem with regular objective function $f$. Then there exists a schedule $S^\prime$ for the same PFB problem, such that
	\begin{enumerate}
		\item on each machine, jobs appear in the same order in $S^\prime$ as they do in $S$,
		\item $S^\prime$ has objective value no larger than $S$, and
		\item $S'$ is $\Gamma$-active.\qed
	\end{enumerate}	
\end{lemma}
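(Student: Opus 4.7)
The plan is to derive the lemma as an immediate corollary by chaining the two preceding lemmas, Lemma \ref{Lem:BatchActive} and Lemma \ref{Lem:ActiveComplTimes}. There is no substantial new work to do: the machinery has been developed in precisely the order needed.

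First, I would apply Lemma \ref{Lem:BatchActive} to the given schedule $S$. This yields a batch-active schedule $S'$ with two useful properties: the job order on every machine is preserved (so property~1 is handed to us for free), and since the objective function $f$ is assumed to be regular, the transformation can only decrease $f$, giving property~2.

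Next, I would invoke Lemma \ref{Lem:ActiveComplTimes} on $S'$. Since $S'$ is batch-active, that lemma directly tells us $S'$ is $\Gamma$-active, which is property~3.

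There is no real obstacle here; the only thing to check is that the notions of ``batch-active'' and ``regular objective'' line up so that Lemma \ref{Lem:BatchActive}'s transformation indeed does not increase $f$. Because regular objectives are non-decreasing in each $C_j$ and the batch-active transformation only shifts batches earlier (so every $c_{ij}$ weakly decreases, including $C_j = c_{mj}$), this is immediate from the definition of regularity stated in the introduction. Thus the proof reduces to writing ``apply Lemma \ref{Lem:BatchActive} to obtain $S'$, then invoke Lemma \ref{Lem:ActiveComplTimes} to conclude $S'$ is $\Gamma$-active.''
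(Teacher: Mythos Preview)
Your proposal is correct and matches the paper's approach exactly: the paper states the lemma with an inline \qed\ and prefaces it with the remark that Lemmas~\ref{Lem:BatchActive} and~\ref{Lem:ActiveComplTimes} together imply it, which is precisely the chaining you describe.
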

In particular, Lemma \ref{Lem:optSchedGammaI} shows that, if for a PFB problem an optimal job ordering $\sigma^*$ is given, then there exists an optimal schedule which is a permutation schedule with jobs ordered by $\sigma^*$ and in addition $\Gamma$-active.

From this point on, we assume that the objective function is a regular sum or bottleneck function, that is, $f(C) = \bigoplus_{j=1}^n f_j(C_j)$, where $\bigoplus\in\{\sum,\max\}$ and $f_j$ is nondecreasing for all $j\in[n]$. We also use the symbol $\oplus$ as a binary operator.
In what follows, we present a dynamic program which, given a job ordering $\sigma$, finds a schedule that is optimal among all $\Gamma$-active permutation schedules obeying job ordering $\sigma$.
For simplicity, we assume that jobs are already indexed by $\sigma$. The dynamic program schedules the jobs one after the other, until all jobs are scheduled. Due to Lemma \ref{Lem:optSchedGammaI}, if $\sigma$ is an optimal ordering, then the resulting schedule is optimal.

Given an instance $I$ of a PFB and a job index $j\in[n]$, let $I_j$ be the modified instance which contains only the jobs $J_1,\dots,J_j$. We write $\Gamma=\Gamma_1 \times \Gamma_2 \times \ldots \times \Gamma_m$ and $\mathcal{B}=\left[b_1\right] \times \left[b_2\right] \times \ldots \times \left[b_m\right]$, where $\times$ is the standard Cartesian product. For a vector
\[t = (t_1, t_2, \ldots, t_i, \ldots, t_{m-1}, t_m) \in \Gamma\]
of $m$ possible completion times and a vector
\[ k = (k_1, k_2, \ldots, k_i, \ldots, k_{m-1}, k_m) \in \mathcal{B}\]
of $m$ possible batch sizes, we say that a schedule $S$ for instance $I_j$ \emph{corresponds} to $t$ and $k$ if, for all $i\in[m]$, $c_{ij}=t_i$ and job $J_j$ is contained in a batch with exactly $k_i$ jobs (including $J_j$) on $M_i$.

Next, we define the variables $g$ of the dynamic program. Let $\mathcal{S}(j,t,k)$ be the set of feasible permutation schedules $S$ for instance $I_j$ satisfying the following properties:
\begin{enumerate}
	\item jobs in $S$ are ordered by their indices,
	\item $S$ corresponds to $t$ and $k$, and
	\item $S$ is $\Gamma$-active.
\end{enumerate}
Then, for $j\in[n]$, $t\in\Gamma$, and $k\in\mathcal{B}$, we define
\[g(j,t,k)=g(j,t_1,t_2,\dots,t_m,k_1,k_2,\dots,k_m)\]
to be the minimum objective value of a schedule in $\mathcal{S}(j,t,k)$ for instance $I_j$. If no such schedule exists, the value of $g$ is defined to be $+\infty$.

Using the above definitions as well as Lemma \ref{Lem:optSchedGammaI}, the objective value of the best permutation schedule ordered by the job indices is given by $\min_{t,k} g(n,t,k)$, $t\in\Gamma$, $k\in\mathcal{B}$. 
In Lemmas \ref{Lem:DynStart} and \ref{Lem:DynRecurrence} (below) we provide formulas to recursively compute the values $g(j,t,k)$, $j=1,2,\ldots,n$, $t\in \Gamma$, $k \in \mathcal{B}$.
In total, we then obtain Algorithm \ref{Alg:DynProg} as our dynamic program.

\begin{algorithm}[!ht]
	\caption{
	}
	\label{Alg:DynProg}
	\begin{algorithmic}[1]
		\item[\begin{tabular}{ll}
			\textbf{Input:} & A PFB instance with regular sum or \\
			& bottleneck objective function $f=\bigoplus_{j=1}^n f_j$.\\
			\textbf{Output:} & An optimal schedule among all permutation\\
			& schedules ordered by the job indices.
		\end{tabular}]
		\item[]
		
		\State For each $t\in\Gamma$ and $k\in\mathcal{B}$, compute $g(1,t,k)$ via the formula defined in Lemma \ref{Lem:DynStart}.\label{Line:Start}
		\State For each $j=2,3,\dots,n$, $t\in\Gamma$, and $k\in\mathcal{B}$, compute $g(j,t,k)$ via the formula defined in Lemma \ref{Lem:DynRecurrence}. If $g(j,t,k)<+\infty$, also store a reference to the minimizers $t'$ and $k'$ on the right-hand side of the recurrence in Lemma \ref{Lem:DynRecurrence}.\label{Line:Recurrence}
		\State Find $t$ and $k$ minimizing $g(n,t,k)$ and start from them to backtrack the references stored in Step \ref{Line:Recurrence} in order to find the optimal schedule.\label{Line:Backtrack}
	\end{algorithmic}
\end{algorithm}

The remainder of this section deals with proving the correctness and running time bound of Algorithm \ref{Alg:DynProg}. 
As mentioned above, in Lemmas \ref{Lem:DynStart} and \ref{Lem:DynRecurrence} we define and prove the correctness of a recursive formula to compute all values of function $g$. Lemma \ref{Lem:DynStart} deals with the starting values $g(1,t,k)$ while Lemma \ref{Lem:DynRecurrence} deals with the recurrence relation.
Finally, in Theorem \ref{Thm:ConstantM} we connect the results of this section to prove the correctness of Algorithm \ref{Alg:DynProg}.

\begin{lemma}\label{Lem:DynStart}
	For $j=1$, $t_i\in\Gamma_i$, and $k_i\in[b_i]$, $i\in[m]$, the starting values of $g$ are given by
	\begin{align*}
	g(1,t,k) = \left\{
	\begin{array}{ll}
	f_1(t_m), & \text{if conditions (i)--(iii) hold},\\
	+\infty,  & \text{otherwise,}
	\end{array} \right.
	\end{align*}
	where 
	\begin{enumerate}[leftmargin=0.75cm, label=(\roman{enumi})]
		\item $k_i=1$ for all $i\in[m]$,
		\item $t_1\geq r_1 + p_1$, and
		\item $t_{i+1}\geq t_{i}+p_{i+1}$ for all $i\in[m-1]$.
	\end{enumerate}
\end{lemma}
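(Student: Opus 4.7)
The plan is to handle the two cases of the formula directly by unpacking what it means for a schedule to belong to $\mathcal{S}(1,t,k)$. First I would show that if $\mathcal{S}(1,t,k)$ contains any schedule $S$, then each of (i)--(iii) must hold. Since the instance $I_1$ has only the single job $J_1$, the batch containing $J_1$ on any machine $M_i$ can have size exactly one, forcing $k_i = 1$ for all $i \in [m]$; this gives (i). Feasibility on $M_1$ requires that $J_1$ starts no earlier than $r_1$, so $t_1 - p_1 \geq r_1$, which is (ii). The flow-shop constraint requires the start time on $M_{i+1}$ to be at least the completion time on $M_i$, i.e., $t_{i+1} - p_{i+1} \geq t_i$, giving (iii).

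For the converse, assuming (i)--(iii) together with the hypotheses $t \in \Gamma$ and $k \in \mathcal{B}$, I would exhibit an explicit schedule in $\mathcal{S}(1,t,k)$. The construction simply places $J_1$ as a singleton batch on each $M_i$ with processing interval $[t_i - p_i,\, t_i]$. Condition (ii) ensures the release date constraint on $M_1$, condition (iii) ensures the flow-shop constraint between consecutive machines, and $k_i = 1 \leq b_i$ trivially respects the batch-size limit. By construction this schedule corresponds to $t$ and $k$, it is vacuously a permutation schedule since there is only one job, and it is $\Gamma$-active because $c_{i,1} = t_i$ lies in $\Gamma_i$ by the hypothesis $t \in \Gamma$.

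Finally, since every $S \in \mathcal{S}(1,t,k)$ has $C_1(S) = c_{m,1}(S) = t_m$ and the objective $\bigoplus_{j=1}^{1} f_j(C_j)$ reduces to the single term $f_1(C_1)$ regardless of whether $\bigoplus$ is $\sum$ or $\max$, every such schedule attains objective value exactly $f_1(t_m)$. Taking the minimum over $\mathcal{S}(1,t,k)$ then yields $g(1,t,k) = f_1(t_m)$ when (i)--(iii) hold and $+\infty$ otherwise. I do not anticipate any serious obstacle: the only mildly delicate point is confirming the $\Gamma$-active property of the constructed schedule, but this follows immediately from the standing assumption $t \in \Gamma$, so the whole argument is essentially a matter of unpacking the definitions of $\mathcal{S}(1,t,k)$ and of feasibility for a single-job instance.
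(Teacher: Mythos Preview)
Your proposal is correct and follows essentially the same approach as the paper: both argue that conditions (i)--(iii) are exactly the feasibility requirements for a single-job schedule corresponding to $(t,k)$, and that any such schedule has objective value $f_1(t_m)$. Your version is somewhat more explicit in verifying the $\Gamma$-active property and in spelling out why the objective reduces to a single term, but the underlying argument is the same.
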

\begin{proof}
	Conditions (i)--(iii) are necessary for the existence of a schedule in $\mathcal{S}(1,t,k)$ because $I_1$ consists of only one job and there must be enough time to process this job on each machine. Conversely, if (i)--(iii) are satisfied, then the vector $t$ of completion times uniquely defines a feasible schedule $S \in \mathcal{S}(1,t,k)$ by $c_{i1}(S)=t_i$. Since $S$ is uniquely defined by $t$ and thus the only schedule in $\mathcal{S}(1,t,k)$, its objective value $f_1(C_1)=f_1(c_{m1})=f_1(t_m)$ is optimal.\qed
\end{proof}

Now we turn to the recurrence formula to calculate $g$ for $j>1$ from the values of $g$ for $j-1$.

\begin{lemma}\label{Lem:DynRecurrence}
	For $j>1$, $t_i\in\Gamma_i$, and $k_i\in[b_i]$, $i\in[m]$, the values of $g$ are determined by
	\begin{align*}
	&g(j,t,k) \\
	&\quad = \left\{
	\begin{array}{ll}
	\min\{f_j(t_m)\oplus g(j-1,t',k') \mid (**) \}, & \text{if }(*),\\
	+\infty,  & \text{otherwise,}
	\end{array} \right.
	\end{align*}
	where the minimum over the empty set is defined to be $+\infty$. Here, $(*)$ is given by conditions
	\begin{enumerate}[leftmargin=0.75cm, label=(\roman{enumi})]
		\item $t_1\geq r_j + p_1$ and
		\item $t_{i+1}\geq t_{i}+p_{i+1}$ for all $i\in[m-1]$,
	\end{enumerate}
	and $(**)$ is given by conditions 
	\begin{enumerate}[leftmargin=0.75cm, label=(\roman{enumi}), resume]
		\item $t'_i\in\Gamma_i$,
		\item $k'_i\in[b_i]$,
		\item if $k_i=1$, then $t'_i\leq t_i-p_i$, and
		\item if $k_i>1$, then $t'_i=t_i$ and $k'_i=k_i-1$,
	\end{enumerate}
	for all $i\in[m]$.
\end{lemma}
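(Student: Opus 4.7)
The plan is to verify both directions of the recurrence. First, I would observe that the conditions $(*)$ are necessary for $\mathcal{S}(j,t,k)$ to be nonempty: job $J_j$ cannot start on $M_1$ before its release date, forcing $t_1\geq r_j+p_1$, and it cannot start on $M_{i+1}$ before completing on $M_i$, forcing $t_{i+1}\geq t_i+p_{i+1}$. If any of $(*)$ fails, then $g(j,t,k)=+\infty$, matching the recurrence. Assuming $(*)$ holds, the argument reduces to a correspondence between schedules $S\in\mathcal{S}(j,t,k)$ and triples $(t',k',S')$ with $(t',k')$ satisfying $(**)$ and $S'\in\mathcal{S}(j-1,t',k')$, such that $\mathrm{obj}(S)=f_j(t_m)\oplus\mathrm{obj}(S')$.

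For $g(j,t,k)\leq\min\{\cdots\}$, I would fix any $(t',k')$ satisfying $(**)$ with $g(j-1,t',k')<+\infty$ and an optimal $S'\in\mathcal{S}(j-1,t',k')$, and extend $S'$ to a schedule $S$ for $I_j$ machine-by-machine. On each $M_i$ with $k_i>1$, condition (vi) ensures the final batch of $S'$ on $M_i$ has size $k_i-1$ and completes at $t_i$, so I enlarge it by $J_j$; on each $M_i$ with $k_i=1$, condition (v) gives $t'_i\leq t_i-p_i$, so I append a fresh singleton batch $\{J_j\}$ starting at $t_i-p_i$, which lies after the previous final batch of $S'$. Feasibility follows from $(*)$: condition (i) handles the release date on $M_1$ and condition (ii) ensures $J_j$ has completed on $M_i$ before its batch on $M_{i+1}$ starts. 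The batch-size constraints $k_i\leq b_i$ hold since $k\in\mathcal{B}$, jobs remain ordered by index on each machine, and $\Gamma$-activeness is preserved because untouched batches keep their completion times (already in $\Gamma_i$) while each new or enlarged batch on $M_i$ completes at $t_i\in\Gamma_i$ by hypothesis. Hence $S\in\mathcal{S}(j,t,k)$ and $\mathrm{obj}(S)=f_j(t_m)\oplus g(j-1,t',k')$.

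For the opposite inequality $g(j,t,k)\geq\min\{\cdots\}$, I would take an arbitrary $S\in\mathcal{S}(j,t,k)$ and delete $J_j$ from each batch it belongs to. Since $S$ is a permutation schedule ordered by index, $J_j$ is the last job on each machine, so deletion affects only the final batch on each $M_i$. Let $S'$ be the result, viewed as a schedule for $I_{j-1}$, and define $(t',k')$ as follows: if $k_i>1$, set $t'_i=t_i$ and $k'_i=k_i-1$ (the final batch shrinks, keeping its start and completion times); if $k_i=1$, set $t'_i$ to the completion time of the penultimate batch of $S$ on $M_i$, which exists since $j-1\geq 1$ forces at least one remaining job through $M_i$, and $k'_i$ to its size. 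In the second case $t'_i\leq t_i-p_i$, because the batch of $J_j$ on $M_i$ had to start no earlier than $t'_i$. Then $(t',k')$ satisfies $(**)$, and $S'$ inherits feasibility, index ordering, correspondence to $(t',k')$, and $\Gamma$-activeness directly from $S$. Consequently $\mathrm{obj}(S')\geq g(j-1,t',k')$, giving $\mathrm{obj}(S)=f_j(t_m)\oplus\mathrm{obj}(S')\geq f_j(t_m)\oplus g(j-1,t',k')$, and minimizing over $S$ yields the required inequality.

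The main obstacle I anticipate is the careful verification in the extension step that the new or enlarged final batch on each machine starts neither before the completion of the previous batch on the same machine (which is guaranteed by condition (v) of $(**)$ when $k_i=1$ and automatic when $k_i>1$) nor before $J_j$'s completion on the previous machine (guaranteed by conditions (i) and (ii) of $(*)$). The remainder of the proof is routine verification of feasibility, $\Gamma$-activeness, and the objective decomposition $\mathrm{obj}(S)=f_j(C_j)\oplus\bigoplus_{j'<j}f_{j'}(C_{j'})$, which is valid since $\oplus$ is associative and commutative in both cases $\oplus\in\{\sum,\max\}$.
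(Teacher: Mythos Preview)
Your proposal is correct and follows essentially the same approach as the paper: both prove the recurrence by restricting a schedule $S\in\mathcal{S}(j,t,k)$ to $I_{j-1}$ for the $\geq$ direction and extending an optimal $S'\in\mathcal{S}(j-1,t',k')$ by appending $J_j$ for the $\leq$ direction, with the same case analysis on $k_i=1$ versus $k_i>1$. The only cosmetic differences are the order in which the two inequalities are treated and that the paper identifies $(t',k')$ as the vectors to which the restricted schedule corresponds, whereas you describe them batch-by-batch; one small remark is that the final step of your objective argument relies on monotonicity of $\oplus$ in each argument rather than merely associativity and commutativity, but this of course holds for both $\sum$ and $\max$.
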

\begin{proof}
Fix values $t_i\in\Gamma_i$ and $k_i\in[b_i]$, $i\in[m]$.
The conditions of $(*)$ are necessary for the existence of a schedule in $\mathcal{S}(j,t,k)$ because there must be enough time to process job $J_j$ on each machine. Therefore, $g$ takes the value $+\infty$, if $(*)$ is violated. For the remainder of the proof, assume that $(*)$ is satisfied. Hence, we have to show that 
\begin{equation}\label{eqn:AlgoStep}
g(j,t,k)=\min\left\{f_j(t_m)\oplus g(j-1,t',k') \st (**)\right\}.
\end{equation}

We first prove ``$\geq$''. If the left-hand side of (\ref{eqn:AlgoStep}) equals infinity, then this direction follows immediately. Otherwise, by the definition of $g$, there must be a schedule $S\in\mathcal{S}(j,t,k)$ with objective value $g(j,t,k)$.
Schedule $S$ naturally defines a feasible schedule $\myModifier{S}$ for instance $I_{j-1}$ by ignoring job $J_j$. 	

Observe that, because $S$ belongs to $\mathcal{S}(j,t,k)$ and therefore is $\Gamma$-active, $\myModifier{S}$ is also $\Gamma$-active and job $J_{j-1}$ finishes processing on machine $M_i$ at some time $\myModifier{t}_i \in \Gamma_i$. Also, since $S$ is feasible, on each machine $M_i$ job $J_{j-1}$ is scheduled in some batch of size $\myModifier{k}_i \leq b_i$.
Thus, $\myModifier{S}$ corresponds to two unique vectors $\myModifier{t} = (\myModifier{t}_1, \myModifier{t}_2, \ldots,
\myModifier{t}_m)$ and $\myModifier{k} = (\myModifier{k}_1, \myModifier{k}_2, \ldots, \myModifier{k}_m)$, which satisfy (iii) and (iv) from $(\ast \ast)$. Note that, in particular, $\myModifier{S} \in \mathcal{S}(j-1,\myModifier{t},\myModifier{k})$.

Furthermore, $\myModifier{t}$ and $\myModifier{k}$ also satisfy (v) and (vi) from $(\ast \ast)$. Indeed, due to the fixed job permutation, one of the following two things happens on each machine $M_i$ in $S$: either jobs $J_j$ and $J_{j-1}$ are batched together, or job $J_j$ is batched in a singleton batch. In the former case, it follows that $1<k_i=\myModifier{k}_i+1$ and $\myModifier{t}_i=t_i$, while the latter case requires $k_i=1$ and $t_i\geq \myModifier{t}_i + p_i$, since the machine is occupied by the previous batch with job $J_{j-1}$ until then.

Thus, $\myModifier{t}$ and $\myModifier{k}$ satisfy $(\ast \ast)$ and we obtain
\begin{align*}
g(j,t,k) &= \bigoplus_{j'=1}^j f_{j'}(C_{j'}(S)) \\
&= f_j(t_m)\oplus\bigoplus_{j'=1}^{j-1} f_{j'}(C_{j'}(\myModifier{S}))\\
&\geq f_j(t_m)\oplus g(j-1,\myModifier{t},\myModifier{k}),
\end{align*}
where the last inequality follows due to the definition of $g$ and $\myModifier{S}\in\mathcal{S}(j-1,\myModifier{t},\myModifier{k})$. Hence, the ``$\geq$'' direction in (\ref{eqn:AlgoStep}) follows because $\myModifier{t}$ and $\myModifier{k}$ satisfy $(**)$.

For the ``$\leq$'' direction, if the right-hand side of (\ref{eqn:AlgoStep}) equals infinity, then this direction follows immediately. Otherwise, let $\myModifier{t}$ and $\myModifier{k}$ be minimizers at the right-hand side. By definition of $g$ there must be a schedule $\myModifier{S}\in\mathcal{S}(j-1,\myModifier{t},\myModifier{k})$ for $I_{j-1}$ with objective value $g(j-1,\myModifier{t},\myModifier{k})$. 

We now show that $\myModifier{S}$ can be extended to a feasible schedule $S\in\mathcal{S}(j,t,k)$. Construct $S$ from $\myModifier{S}$ by adding job $J_j$ in the following way:
\begin{itemize}
	\item if in $\myModifier{S}$ there is a batch on machine $M_i$ which ends at time $t_i$, add $J_j$ to that batch (we show later that this does not cause the batch to exceed the maximum batch size $b_i$ of $M_i$);
	\item otherwise, add a new batch on machine $M_i$ finishing at time $t_i$ and containing only job $J_j$ (we show later that this does not create overlap with any other batch on machine $M_i$).
\end{itemize}	

First note that $c_{ij}(S) = t_i$ for all $i \in \left[m\right]$, and thus $S$ corresponds to $t$ by definition.	
Furthermore, for each machine $M_i$ consider the two cases (a) $k_i = 1$ and (b) $k_i > 1$.

In case (a), due to (v) it follows that $\myModifier{t}_i \leq t_i - p_i$. Since job $J_{j-1}$ finishing at time $\myModifier{t}_i$ is the last job completed on machine $M_i$ in schedule $\myModifier{S}$, by construction in schedule $S$ job $J_j$ is in a singleton batch on machine $M_i$ that starts at time $t_i - p_i$ and ends at time $t_i$.
Therefore, in this case, $S$ corresponds to $k$, because $k_i = 1$ and job $J_j$ is in a singleton batch. Also, since the last batch on machine $M_i$ in $\myModifier{S}$ ends at time $\myModifier{t}_i \leq t_i - p_i$ and the batch with job $J_j$ starts at time $t_i-p_i$, no overlapping happens between the new batch for job $J_j$ and any other batch on machine $M_i$.

In case (b), due to (vi) it follows that $\myModifier{t}_i = t_i$ and by construction job $J_j$ is scheduled in the same batch as job $J_{j-1}$ on machine $M_i$ in $S$. Since $\myModifier{S}$ corresponds to $\myModifier{k}$ and again due to (vi), this means that $J_j$ is scheduled in a batch with $k_i$ jobs on machine $M_i$ and S corresponds to $k$ in this case. Also, since $k_i \in \left[b_i\right]$ by definition, no batch in $S$ exceeds its permissible size.

Combining the considerations for cases (a) and (b), together with the feasiblity of schedule $\myModifier{S}$, it follows that
\begin{description}
	\item[(1)] $S$ corresponds to $k$,
	\item[(2)] no overlapping happens between batches in $S$, and
	\item[(3)] all batches in $S$ are of permissible size.
\end{description}	

In order to show feasibility of $S$ it is only left to show that no job starts before its release date on machine $M_1$, and no job starts on machine $M_i$ before it is completed on machine $M_{i-1}$.
As $\myModifier{S}$ is feasible, this is clear for all jobs other than $J_j$. On the other hand, since $S$ corresponds to $t$, and $t$ fulfills $(\ast)$, it also follows for job $J_j$.

Thus, $S$ is indeed feasible. In total, we have shown all conditions for $S\in\mathcal{S}(j,t,k)$. Therefore, we obtain
\begin{align*} \label{eq:recurranceLeq}
g(j,t,k) &\leq \bigoplus_{j'=1}^j f_{j'}(C_{j'}(S))\\
&= f_j(t_m)\oplus\bigoplus_{j'=1}^{j-1} f_{j'}(C_{j'}(\myModifier{S}))\\
&= f_j(t_m)\oplus g(j-1,\myModifier{t},\myModifier{k}),
\end{align*}	
where the last equality is due to the choice of $\myModifier{S}$ as a schedule with objective value $g(j-1, \myModifier{t}, \myModifier{k})$. Hence, the ``$\leq$'' direction in (\ref{eqn:AlgoStep}) follows due to the choice of $\myModifier{t}$ and $\myModifier{k}$ as minimizers of the right-hand side.\qed
\end{proof}

Using Lemmas \ref{Lem:optSchedGammaI}, \ref{Lem:DynStart}, and \ref{Lem:DynRecurrence} we can now prove correctness of Algorithm \ref{Alg:DynProg} and show that its runtime is bounded polynomially (if $m$ is fixed).
\begin{theorem}\label{Thm:ConstantM}
	Consider a PFB instance with a constant number of $m$ machines and a regular sum or bottleneck objective function. Then, for a given ordering of the jobs, Algorithm \ref{Alg:DynProg} finds the best permutation schedule in time $\mathcal{O}(n^{m^2+4m+1})$.
\end{theorem}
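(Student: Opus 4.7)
My plan is to prove the theorem in two stages: first the correctness of Algorithm \ref{Alg:DynProg}, then the running time bound.

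For correctness, I would show by induction on $j$ that each value $g(j,t,k)$ computed by the algorithm agrees with its definition from Section \ref{Sec:Algorithm}, namely the minimum objective of a schedule in $\mathcal{S}(j,t,k)$. The base case $j=1$ is handled by Lemma \ref{Lem:DynStart}, and the inductive step is exactly Lemma \ref{Lem:DynRecurrence}. Once this invariant is established, Lemma \ref{Lem:optSchedGammaI} guarantees that among all permutation schedules obeying the fixed job ordering, some optimal one is $\Gamma$-active and hence belongs to $\mathcal{S}(n,t,k)$ for some $(t,k) \in \Gamma \times \mathcal{B}$. Therefore $\min_{t,k} g(n,t,k)$ equals the optimal value, and backtracking along the stored minimizer references in Step \ref{Line:Backtrack} reconstructs an actual schedule attaining it.

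For the running time, I would first bound the state space using the estimate $|\Gamma_i| \leq n^{i+1}$ from Definition \ref{Def:GammaActive}, which gives $|\Gamma| = \prod_{i=1}^m |\Gamma_i| \leq n^{m(m+3)/2}$, together with $|\mathcal{B}| \leq n^m$ after truncating each $b_i$ to $n$. The delicate step is to bound the total work spent on the recurrence of Lemma \ref{Lem:DynRecurrence}. Rather than multiplying the number of states by a worst-case number of valid predecessors, I would enumerate the tuples $(j,t,k,t',k')$ coordinate-wise across the machines: for each $i$, condition (vi) fixes $(t'_i,k'_i)$ whenever $k_i > 1$ (contributing at most $n \cdot |\Gamma_i|$ joint choices for $(t_i,k_i,t'_i,k'_i)$), while the case $k_i = 1$ of condition (v) leaves $t'_i \in \Gamma_i$ and $k'_i \in [b_i]$ essentially independent of $t_i$ (contributing at most $n \cdot |\Gamma_i|^2$ joint choices). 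Both cases are bounded by $n \cdot |\Gamma_i|^2$, and dropping the coupling condition (ii) can only weaken the bound, so the total number of valid tuples, taken over $j \in [n]$ and the $m$ coordinates, is at most $n \cdot \prod_{i=1}^m (n \cdot |\Gamma_i|^2) = n^{m+1} \cdot |\Gamma|^2 \leq n^{m+1} \cdot n^{m(m+3)} = n^{m^2 + 4m + 1}$. Each tuple contributes only one table lookup, one $f_j$-evaluation, and one $\oplus$-operation, so the claimed bound follows.

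The main obstacle I anticipate is obtaining exactly the exponent $m^2 + 4m + 1$ rather than a weaker bound. A naive product of a state count of order $n^{1+m(m+5)/2}$ with $n^{m(m+5)/2}$ predecessors per state would instead yield $n^{m^2+5m+1}$. The savings come precisely from the observation that on every coordinate $i$ with $k_i > 1$, condition (vi) identifies $(t'_i,k'_i)$ with $(t_i, k_i-1)$, so these two pairs do not contribute independent factors of $|\Gamma_i|$. Consolidating the coordinate-wise factors before multiplying across machines is therefore the crux of the analysis; once this accounting is in place, the rest reduces to routine polynomial arithmetic that dominates both the $O(1)$-per-state base case from Lemma \ref{Lem:DynStart} and the linear-time backtracking in Step \ref{Line:Backtrack}.
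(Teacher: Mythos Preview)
Your proposal is correct and follows essentially the same route as the paper: correctness via Lemmas \ref{Lem:DynStart}, \ref{Lem:DynRecurrence}, and \ref{Lem:optSchedGammaI}, and the running time via the coordinate-wise enumeration of valid tuples $(t_i,t'_i,k_i,k'_i)$, exploiting that condition (vi) determines $(t'_i,k'_i)$ from $(t_i,k_i)$ whenever $k_i>1$. The paper states the per-coordinate bound as $b_i\lvert\Gamma_i\rvert^2+b_i\lvert\Gamma_i\rvert\leq 2b_i\lvert\Gamma_i\rvert^2$ before substituting $b_i\leq n$ and $\lvert\Gamma_i\rvert\leq n^{i+1}$, whereas you substitute $b_i\leq n$ up front; the resulting arithmetic is identical, and the suppressed factor $2^m$ is constant for fixed $m$.
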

\begin{proof}
	Without loss of generality, assume that the desired job ordering is given by the job indices.
	By Lemmas \ref{Lem:DynStart} and \ref{Lem:DynRecurrence}, Algorithm \ref{Alg:DynProg} correctly computes the values of function $g$. Lemma \ref{Lem:optSchedGammaI} assures that the resulting schedule is indeed optimal among all permutation schedules ordered by the job indices.
	
	Concerning the time complexity, note that there are $\left\lvert \Gamma \right\rvert$ many possible choices for vector $t$ and $\left\lvert\mathcal{B} \right\rvert $ many possible choices for vector $k$, yielding a total of $\left\lvert \Gamma \right\rvert \cdot\left\lvert\mathcal{B} \right\rvert$ possible combinations for vectors $t$ and $k$.
 	The number can be bounded from above as follows:
	\begin{equation*}
	\begin{array}{ccccc}
	\left\lvert \Gamma \right\rvert \cdot \left\lvert\mathcal{B} \right\rvert&
	=\ & \abs{\Gamma_1}\cdot\abs{\Gamma_2}\cdot\ldots\cdot\abs{\Gamma_m} &\cdot& b_1 \cdot b_2 \cdot\ldots\cdot b_m\\
	&\leq\ & n^2 \cdot n^3 \cdot \ldots \cdot n^{m+1} & \cdot & n^m\\
	&=\ & n^{\frac{(m+1)(m+2)}{2}-1+m}& &\\
	&=\ & n^{\frac{m^2}{2}+\frac{5m}{2}}.& &
	\end{array}
	\end{equation*}
	Here, the first $m$ factors in the second line are due to $\Gamma_i \leq n^{i+1}$ and the last factor $n^m$ is due to inequalities $b_i \leq n$ for all $i \in \left[m\right]$.
	
	Therefore, Step \ref{Line:Start} of Algorithm \ref{Alg:DynProg} involves at most $n^{\frac{m^2}{2}+\frac{5m}{2}}$ iterations, each of which can be performed in constant time because $m$ is fixed.
	
	For Step \ref{Line:Recurrence}, first observe that conditions $(*)$ and $(**)$ can be checked in constant time for fixed $m$. In a single iteration, that is, for fixed $j$, $t$, and $k$, in order to compute the minimum in the recurrence of Lemma \ref{Lem:DynRecurrence}, we may have to consider all possible choices for $t'$ and $k'$. The number of these choices is $\abs{\{(t',k')\in\Gamma\times\mathcal{B}\mid(**)\}}$. Aggregating over all $t$ and $k$, while keeping $j$ fixed, yields
	\begin{align*}
	&\quad\sum_{(t,k)\in\Gamma\times\mathcal{B}}\abs{\{(t',k')\in\Gamma\times\mathcal{B}\mid(**)\}}\\
	&= \abs{\{(t,t',k,k')\in\Gamma \times \Gamma \times\mathcal{B} \times \mathcal{B}\mid(**)\}}\\
	&= \prod_{i=1}^m\ \abs{\{(t_i,t'_i,k_i,k'_i) \in\Gamma_i \times \Gamma_i \times[b_i] \times [b_i]
		 \\ 
		& \qquad \qquad \qquad \qquad \qquad \qquad \qquad \qquad \mid(v)\text{ and }(vi)\}}.
	\end{align*}
	The number of quadruples $(t,t',k,k')$ satisfying (v) and (vi) can be bounded by $b_i\abs{\Gamma_i}^2+b_i\abs{\Gamma_i}$. Here, the first term corresponds to the quadruples with $k_i=1$, while the second term corresponds to the quadruples with $k_i>1$, in which case $t'_i$ and $k'_i$ are uniquely determined by $t_i$ and $k_i$.
	Further, note that due to $\abs{\Gamma_i} \geq 1$, it holds that $b_i\abs{\Gamma_i}^2+b_i\abs{\Gamma_i} \leq 2 b_i \abs{\Gamma_i}^2$. Hence, the total running time of Step \ref{Line:Recurrence} (also aggregating over all choices of $j$), is at most
	\begin{align}
	\begin{split}\label{Equ:RuntimeEstimate}
	&\phantom{{}={}}\mathcal{O}\left(n\prod_{i=1}^m 2 b_i \abs{\Gamma_i}^2\right)\\
	&=\mathcal{O}\left(n \prod_{i=1}^m b_i \abs{\Gamma_i}^2\right)\\	&\subseteq\mathcal{O}\left(n\prod_{i=1}^m n\cdot n^{2i+2}\right)\\
	&=\mathcal{O}(n^{m^2+4m+1}).
	\end{split}
	\end{align}
	The first equality is due to the factor $2^m$ being constant for constant $m$ and the set inclusion is due to $b_i \leq n$ and $\abs{\Gamma_i} \leq n^{i+1}$.
	
	Finally, in Step \ref{Line:Backtrack}, we take the minimum over at most $n^{\frac{m^2}{2}+\frac{5m}{2}}$ many values. The backtracking and reconstruction of the final schedule can be done in time $\mathcal{O}(n)$ for constant $m$.
	
	Hence, in total, Step \ref{Line:Recurrence} is the bottleneck and the total time complexity is $\mathcal{O}(n^{m^2+4m+1})$.\qed
\end{proof}

Combining Theorems \ref{Thm:PermI} and \ref{Thm:ConstantM}, we obtain:

\begin{corollary}\label{Cor:FixedM}
	For a PFB with release dates and a fixed number $m$ of machines, the makespan can be minimized in $\mathcal{O}(n^{m^2+4m+1})$ time. The same holds for the total completion time.\qed
\end{corollary}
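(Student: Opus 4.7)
The plan is to combine Theorem \ref{Thm:PermI} and Theorem \ref{Thm:ConstantM} in a straightforward way. For both objective functions $C_{\max}$ and $\sum C_j$, each function $f_j$ defined implicitly by $f_j(C_j)=C_j$ is nondecreasing, and the aggregation is of the form $\bigoplus \in \{\max, \sum\}$. Hence both objectives are regular bottleneck or sum objectives as required by Theorem \ref{Thm:ConstantM}, and also satisfy the hypothesis of Theorem \ref{Thm:PermI}.

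First, I would apply Theorem \ref{Thm:PermI} to reduce the problem to that of finding an optimal permutation schedule whose job ordering is an earliest release date ordering. Such an ordering can be computed in $\mathcal{O}(n\log n)$ time by simply sorting the jobs in nondecreasing order of their release dates $r_j$. After this sorting step we may, without loss of generality, assume that the jobs are indexed so that $r_1 \leq r_2 \leq \dots \leq r_n$, and that the desired ordering $\sigma$ is the identity.

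Second, I would feed this job ordering into Algorithm \ref{Alg:DynProg}. By Theorem \ref{Thm:ConstantM}, the algorithm returns the best permutation schedule obeying $\sigma$ in $\mathcal{O}(n^{m^2+4m+1})$ time. Since by Theorem \ref{Thm:PermI} permutation schedules in earliest release date order are optimal among all feasible schedules for the objectives $C_{\max}$ and $\sum C_j$, the schedule returned by the algorithm is globally optimal.

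Finally, the total running time is $\mathcal{O}(n \log n) + \mathcal{O}(n^{m^2+4m+1}) = \mathcal{O}(n^{m^2+4m+1})$, since the dynamic programming step dominates the sorting step for any $m \geq 1$. There is no real obstacle in this proof; the entire work has already been done in the preceding theorems, and the corollary is essentially a direct composition of the reduction to permutation schedules (Theorem \ref{Thm:PermI}) with the polynomial-time dynamic program for a fixed ordering (Theorem \ref{Thm:ConstantM}).
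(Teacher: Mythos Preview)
Your proposal is correct and matches the paper's approach exactly: the paper states the corollary as an immediate consequence of combining Theorems~\ref{Thm:PermI} and~\ref{Thm:ConstantM}, without further argument. Your added details (verifying regularity of the objectives, accounting for the $\mathcal{O}(n\log n)$ sorting step) are sound and only make the composition more explicit.
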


In particular, for any fixed number $m$ of machines, the problems
$$Fm \mid r_j, p_{ij} = p_i, p\text{-batch}, b_i \mid f$$ 
with $f\in\{C_{\max}, \sum C_j \}$ can be solved in polynomial time.

\subsection{Improved running time for makespan minimization}

\citet[Theorem 6]{SungEtAl:ProblemReduction} showed that for minimizing the make\-span in a PFB, it is optimal to use the so-called first-only-empty batching on the last machine $M_m$, that is, to completely fill all batches on the last machine, except for the first batch, which might contain less jobs if $n$ is not a multiple of $b_m$. In a permutation schedule ordered by the job indices with first-only-empty batching on $M_m$, a makespan of $c$ can be achieved if and only if for each job $J_j$ we have enough time to process the $n-j+1$ jobs $J_j, J_{j+1},\dots,J_n$ on $M_m$ after $J_j$ has been completed on $M_{m-1}$. In other words, a makespan of $c$ can be achieved, if and only if $c\geq c_{(m-1)j}+\left\lceil\frac{n-j+1}{b_m}\right\rceil$ for every $j\in[n]$. Hence, the optimal makespan is given by
$\max_{j=1,2,\ldots,n} c_{(m-1)j}+\left\lceil\frac{n-j+1}{b_m}\right\rceil$. Therefore, the problem to minimize the makespan on an $m$-machine PFB can be solved in the following way: We first find a schedule for the first $m-1$ machines that minimizes the objective function $\bigoplus_{j=1}^n f_j(c_{(m-1)j})$ with $\bigoplus=\max$ and $f_j(c_{(m-1)j})=c_{(m-1)j}+\left\lceil\frac{n-j+1}{b_m}\right\rceil$. By Theorem \ref{Thm:ConstantM}, this can be done in time \[\mathcal{O}(n^{(m-1)^2+4(m-1)+1})=\mathcal{O}(n^{m^2+2m-2})\]
for fixed $m$.
Afterwards, this schedule is extended by a first-only-empty batching on $M_m$, which does not further increase the asymptotic runtime. Thus, for the makespan objective, we can strengthen Corollary \ref{Cor:FixedM}:

\begin{corollary}\label{Cor:Makespan}
	For a PFB with release dates and a fixed number $m$ of machines, the makespan can be minimized in $\mathcal{O}(n^{m^2+2m-2})$ time. \qed
\end{corollary}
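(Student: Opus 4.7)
The plan is to sidestep directly running Algorithm \ref{Alg:DynProg} on all $m$ machines, and instead reduce the makespan problem to an $(m-1)$-machine instance with a suitably modified regular bottleneck objective. The key enabling fact is the structural result of \citet[Theorem 6]{SungEtAl:ProblemReduction}: among optimal makespan schedules there is always one which uses \emph{first-only-empty} batching on the last machine $M_m$, i.e.\ every batch on $M_m$ is completely full, except possibly the very first one when $n$ is not a multiple of $b_m$. Combined with Theorem \ref{Thm:PermI}, we may therefore restrict attention to permutation schedules ordered by nondecreasing release dates that use first-only-empty batching on $M_m$.

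Given such a restriction, I would first observe that once the schedule on the first $m-1$ machines is fixed, the optimal completion time of job $J_j$ on $M_m$ is exactly $c_{(m-1)j}+\lceil (n-j+1)/b_m\rceil\cdot p_m$ (using first-only-empty batching from the back). Hence the achievable makespan equals
\[
\max_{j\in[n]}\Bigl(c_{(m-1)j} + \bigl\lceil\tfrac{n-j+1}{b_m}\bigr\rceil p_m\Bigr).
\]
Minimizing the makespan thus reduces to minimizing
\[
\bigoplus_{j=1}^{n} f_j\bigl(c_{(m-1)j}\bigr),\qquad f_j(x)=x+\bigl\lceil\tfrac{n-j+1}{b_m}\bigr\rceil p_m,
\]
with $\bigoplus=\max$, on the first $m-1$ machines. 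Each $f_j$ is nondecreasing, so this is a regular bottleneck objective to which Theorem \ref{Thm:ConstantM} applies directly.

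The next step is to invoke Theorem \ref{Thm:ConstantM} on this $(m-1)$-machine PFB with the above objective. For fixed $m$ the runtime is $\mathcal{O}(n^{(m-1)^2+4(m-1)+1})=\mathcal{O}(n^{m^2+2m-2})$. Once the dynamic program returns an optimal schedule for $M_1,\dots,M_{m-1}$, we append a first-only-empty batching on $M_m$; constructing this last-machine schedule takes $\mathcal{O}(n)$ time and does not affect the asymptotic bound.

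I do not expect any single technical obstacle here, since both the structural ingredient (first-only-empty batching on $M_m$) and the algorithmic engine (Algorithm \ref{Alg:DynProg} applied to a regular bottleneck objective) are already in place. The only care needed is to verify that the objective $f_j(x)=x+\lceil (n-j+1)/b_m\rceil p_m$ genuinely fits into the framework of Section \ref{Sec:Algorithm}, namely that it is regular (clearly so, as each $f_j$ is nondecreasing in $x$) and that the earliest release date ordering supplied to the dynamic program remains optimal (which follows from Theorem \ref{Thm:PermI} since the makespan on $M_{m-1}$ together with the tail term still depends only on the multiset of completion times in the sense required). Combining these observations yields the claimed $\mathcal{O}(n^{m^2+2m-2})$ bound.
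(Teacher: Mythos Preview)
Your approach is essentially identical to the paper's: reduce to an $(m-1)$-machine bottleneck problem via first-only-empty batching on $M_m$, apply Theorem~\ref{Thm:ConstantM}, and read off $\mathcal{O}(n^{(m-1)^2+4(m-1)+1})=\mathcal{O}(n^{m^2+2m-2})$. One small correction to your parenthetical: the modified objective $\max_j f_j(c_{(m-1)j})$ does \emph{not} depend only on the multiset of $(m-1)$-st machine completion times (the tails $\lceil(n-j+1)/b_m\rceil p_m$ are job-specific), so that is not the right justification; the correct reason you may fix the earliest-release-date order is simply that Theorem~\ref{Thm:PermI} already guarantees an optimal ERD permutation schedule for the full $m$-machine makespan problem, and your reduction then operates within that restricted class.
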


\section{Equal Release Dates} \label{Sec:equalReleaseDates}

The dynamic program presented in the previous section works for a very general class of objective functions. Still, Corollary \ref{Cor:FixedM} only holds for makespan and total completion time because for other standard objective functions permutation schedules are not optimal, as shown in Example \ref{Exa:NonPermSchedule}.

However, in the case where all release dates are equal, it turns out that permutation schedules are always optimal. We first show that in the case of equal release dates, any feasible schedule $S$ can be transformed into a feasible permutation schedule $\SchSec$ with equal objective value.

\begin{lemma} \label{lem:permuWithoutRelease}
	Let $S$ be a feasible schedule for a PFB where all jobs are released at time $0$. Then there exists a feasible permutation schedule $\SchSec$ such that $C_j(S) = C_j(\SchSec)$ for all $j = 1, 2, \ldots, n$.
\end{lemma}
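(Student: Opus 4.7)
The plan is to reduce Lemma~\ref{lem:permuWithoutRelease} to Lemma~\ref{lem:releaseDatePermu}, which already gives us a permutation schedule preserving the \emph{multi-set} of completion times. The extra leverage we have when all release dates are equal is that every job ordering is trivially an earliest release date ordering, so we are free to pick any permutation $\sigma$ and feed it into Lemma~\ref{lem:releaseDatePermu}. The task therefore reduces to choosing $\sigma$ so cleverly that the multi-set equality upgrades to an equality job-by-job.

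First, I would index the jobs so that $C_1(S) \leq C_2(S) \leq \dots \leq C_n(S)$, i.e.\ in nondecreasing order of their completion times in $S$; let $\sigma$ be the identity permutation under this indexing. Since $r_j = 0$ for every $j$, we have $r_{\sigma(1)} \leq r_{\sigma(2)} \leq \dots \leq r_{\sigma(n)}$ trivially, so $\sigma$ qualifies as an earliest release date ordering. Applying Lemma~\ref{lem:releaseDatePermu} yields a feasible permutation schedule $\SchSec \coloneqq \hat{S}$ in which jobs appear in order $\sigma$ on every machine and whose multi-set of completion times coincides with that of $S$.

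Now I would argue that the per-job completion times actually agree. Because $\SchSec$ is a permutation schedule ordered by $\sigma$ (the identity), its completion times on the last machine satisfy $C_1(\SchSec) \leq C_2(\SchSec) \leq \dots \leq C_n(\SchSec)$, i.e.\ they are already sorted. The completion times in $S$ are likewise sorted, by our choice of indexing. Since the sorted sequence is uniquely determined by the multi-set and both schedules share the same multi-set of completion times, the two sorted sequences must coincide entry by entry, giving $C_j(\SchSec) = C_j(S)$ for every $j \in \{1,2,\dots,n\}$, which is exactly the desired conclusion.

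There is no real obstacle beyond noticing the right choice of $\sigma$; the heavy lifting is already done by Lemma~\ref{lem:releaseDatePermu}. The only subtlety worth highlighting in the writeup is that we must rely on the combination of two facts, namely that $\hat{S}$ processes jobs in order of their indices (so its completion times are nondecreasing in $j$) \emph{and} that the multi-set is preserved; either fact on its own would not suffice to pin down the completion time of each individual job.
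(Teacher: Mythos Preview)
Your proof is correct and follows essentially the same approach as the paper: the paper also chooses $\sigma$ to be the ordering of jobs on the last machine (equivalently, by completion time $C_j(S)$), invokes Lemma~\ref{lem:releaseDatePermu}, and then observes that both $S$ and $\SchSec$ have their last-machine completion times listed in $\sigma$-order so the multi-set equality upgrades to a per-job equality. Your ``two sorted lists with the same multi-set coincide'' phrasing is a clean way to state this last step.
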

\begin{proof}
	Let $\sigma$ be the ordering of jobs on the last machine.
	Since $r_j = 0$ for all $j = 1, 2, \ldots, n$, ordering $\sigma$ is an earliest release date ordering.
	Thus we can use Lemma~\ref{lem:releaseDatePermu} to construct a permutation schedule $\SchSec$, where jobs are ordered by $\sigma$ on all machines and the multi-set of completion times is the same as for $S$.
	Furthermore, since the ordering on the last machine in $\SchSec$ is the same as in $S$, in $\SchSec$ each job has the same completion time as in $S$, i.e.\ $C_j(S) = C_j(\SchSec)$ for all $j=1,2,\ldots,n$. \qed
\end{proof}

From Lemma \ref{lem:permuWithoutRelease}, it follows that permutation schedules are optimal for PFBs with equal release dates, as stated in the following theorem. The theorem generalizes the result presented by \cite{Rachamadugu-1982-15120} for usual proportionate flow shop without batching. Also compare \citep[Lemma 1]{SungEtAl:ProblemReduction}.
In the theorem, we use the term \emph{scheduling objective function} to describe any objective function only dependent on the completion times of the jobs.
\begin{theorem}\label{Thm:PermNoRelease}
	For any scheduling objective function, if an optimal schedule exists for a PFB problem without release dates, then there also exists an optimal permutation schedule. 
\end{theorem}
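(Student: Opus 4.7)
The plan is to obtain this theorem as an essentially immediate consequence of Lemma \ref{lem:permuWithoutRelease}. The key observation is that the lemma preserves not merely the multi-set of completion times, but the individual completion time $C_j$ of every job $J_j$. Combined with the definition of a scheduling objective function as depending only on the vector $(C_1, C_2, \ldots, C_n)$, this means that the transformation of Lemma \ref{lem:permuWithoutRelease} preserves the objective value exactly.

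Concretely, I would start the proof by letting $S^*$ be any optimal schedule for the given PFB instance with all release dates zero. Then I would invoke Lemma \ref{lem:permuWithoutRelease} applied to $S^*$ to produce a feasible permutation schedule $\SchSec$ satisfying $C_j(\SchSec) = C_j(S^*)$ for every $j \in [n]$. Since the scheduling objective function is, by assumption, a function only of the tuple of job completion times $(C_1, \ldots, C_n)$, it evaluates to the same value on $S^*$ and on $\SchSec$. Therefore $\SchSec$ is a feasible permutation schedule attaining the optimal objective value, which is exactly the claim.

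I do not expect a genuine obstacle here, since all of the work has already been done in Lemma \ref{lem:releaseDatePermu} and its specialization Lemma \ref{lem:permuWithoutRelease}. The only subtlety worth flagging is the hypothesis ``if an optimal schedule exists'': the theorem is stated in this conditional form because, for completely general regular objective functions, one should not assume existence a priori. The argument above shows that existence of any optimum implies existence of a permutation-schedule optimum, which is exactly the logically correct statement. For the standard objectives considered later in the paper (such as $\sum w_j C_j$, $L_{\max}$, $\sum T_j$, $\sum U_j$, $\sum w_j U_j$), existence is automatic since the objective takes only finitely many distinct values on batch-active schedules (by Lemma \ref{Lem:ActiveComplTimes}), so the theorem unconditionally yields an optimal permutation schedule in those cases.
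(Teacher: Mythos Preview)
Your proposal is correct and matches the paper's proof essentially line for line: apply Lemma~\ref{lem:permuWithoutRelease} to an optimal schedule $S^*$ to obtain a permutation schedule $\SchSec$ with $C_j(\SchSec)=C_j(S^*)$ for all $j$, and conclude since the objective depends only on the completion times. Your closing remark about the existence hypothesis and its automatic satisfaction for the standard objectives via batch-active schedules also mirrors the discussion the paper gives immediately after the theorem.
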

\begin{proof}
		Apply Lemma \ref{lem:permuWithoutRelease} to an optimal schedule $S^*$ in order to obtain a permutation schedule $\SchSec$. Note that, by Lemma \ref{lem:permuWithoutRelease}, for every job, the completion time on the last machine in schedule $\SchSec$ is exactly the same as in schedule $S^*$. Thus, for any completion time dependent objective function, schedules $S^*$ and $\SchSec$ have the same objective value. \qed
\end{proof}

Note that for all regular objective functions (including those studied in this paper), at least one optimal solution always exists. Indeed, Lemma \ref{Lem:BatchActive} shows that for regular objective functions batch-active schedules are optimal, and there are only finitely many batch active schedules for a PFB.
Thus, the restriction in the statement of the theorem is only a formality.
However, Theorem \ref{Thm:PermNoRelease} holds for a much wider range of scheduling objective functions than studied in this paper. Importantly, since finish times for all jobs stay exactly the same when moving from schedule $S$ to schedule $\SchSec$ in the proof of Theorem \ref{Thm:PermNoRelease}, the theorem holds also for non-regular objective functions (as long as they are defined in such a way that at least one optimal solution exists).

Of course, it may still be hard to find optimal permutations. However, the following theorem shows that for several traditional objective functions, an optimal permutation can be found efficiently.
For the solution of due date related objectives, earliest due date orderings are of particular importance. They are defined analogously to earliest release date orderings: an ordering $\sigma$ is an \emph{earliest due date ordering}, if $d_{\sigma(1)} \leq d_{\sigma(2)} \leq \ldots \leq d_{\sigma(n)}$.

\begin{theorem} \label{th:optimalOrderings}
Consider a PFB without release dates.
	\begin{enumerate}[leftmargin=0.75cm, label=(\roman{enumi})]
	\item For minimizing the total weighted completion time, any ordering by non-increasing weights is optimal.
	\item For minimizing maximum lateness and total tardiness, any earliest due date ordering is optimal.
\end{enumerate}

\end{theorem}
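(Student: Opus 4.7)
The plan is to combine Theorem \ref{Thm:PermNoRelease} with a rearrangement argument on the assignment of jobs to completion times. By Theorem \ref{Thm:PermNoRelease}, each of the three objectives admits an optimal permutation schedule $S^*$ (with some ordering $\sigma^*$). Because all release dates vanish, every ordering $\hat\sigma$ of the jobs is trivially an earliest release date ordering, so Lemma \ref{lem:releaseDatePermu} lets me take $\hat\sigma$ to be the ordering claimed in (i) or (ii) and produce a feasible permutation schedule $\hat S$ in which the jobs are processed in the order $\hat\sigma$ on every machine while every batch keeps exactly the start time, completion time, and size it had in $S^*$. In particular, if $C^{(1)} \leq C^{(2)} \leq \dots \leq C^{(n)}$ denotes the sorted multi-set of completion times on the last machine, which is identical in $S^*$ and $\hat S$, then in both schedules the $k$-th job along the permutation order receives completion time $C^{(k)}$.

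With this reduction in hand, the value of any permutation schedule with ordering $\pi$ depends on $\pi$ only through $\sum_k w_{\pi(k)} C^{(k)}$ for weighted total completion time, $\max_k \{C^{(k)} - d_{\pi(k)}\}$ for $L_{\max}$, and $\sum_k \max\{C^{(k)} - d_{\pi(k)}, 0\}$ for $\sum T_j$. It therefore suffices to show that, over all permutations $\pi$, the ordering prescribed in (i) or (ii) minimizes the corresponding expression.

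For part (i), this is exactly the classical rearrangement inequality: since the $C^{(k)}$ are non-decreasing, pairing them with the weights in non-increasing order is optimal. For part (ii), I would run the standard adjacent interchange: if $d_{\pi(k)} > d_{\pi(k+1)}$ for some $k$, swap the two jobs in $\pi$. For $L_{\max}$, the combined contribution $\max(C^{(k)} - d_{\pi(k)}, C^{(k+1)} - d_{\pi(k+1)})$ equals $C^{(k+1)} - d_{\pi(k+1)}$ (because $C^{(k+1)} \geq C^{(k)}$ and $d_{\pi(k)} > d_{\pi(k+1)}$), and after the swap both arguments $C^{(k)} - d_{\pi(k+1)}$ and $C^{(k+1)} - d_{\pi(k)}$ are at most this value, so $L_{\max}$ cannot increase. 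For $\sum T_j$, the analogous inequality to be proved is $\max(C^{(k)} - d_{\pi(k+1)}, 0) + \max(C^{(k+1)} - d_{\pi(k)}, 0) \leq \max(C^{(k)} - d_{\pi(k)}, 0) + \max(C^{(k+1)} - d_{\pi(k+1)}, 0)$.

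The only genuinely technical step I foresee is this last inequality for total tardiness. It is routine but requires a short case distinction based on where $d_{\pi(k)}$ and $d_{\pi(k+1)}$ fall relative to $C^{(k)}$ and $C^{(k+1)}$; none of the cases is nontrivial, but all of them must be verified to conclude that repeated adjacent swaps yield an EDD order of no larger $\sum T_j$. Everything else follows immediately from Theorem \ref{Thm:PermNoRelease}, Lemma \ref{lem:releaseDatePermu}, and the rearrangement inequality.
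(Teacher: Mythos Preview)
Your proposal is correct and follows essentially the same exchange argument as the paper: start from an optimal permutation schedule via Theorem~\ref{Thm:PermNoRelease}, then reorder jobs against the fixed multi-set of completion times without increasing the objective. The only cosmetic difference is that you invoke Lemma~\ref{lem:releaseDatePermu} once up front to reduce everything to a pure assignment problem on $C^{(1)}\le\dots\le C^{(n)}$ (and then cite the rearrangement inequality and adjacent EDD swaps), whereas the paper performs the pairwise job swaps directly inside the schedule $S^*$; the underlying inequalities and case distinctions for $L_{\max}$ and $\sum T_j$ are identical.
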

\begin{proof}
	For proving (i), assume jobs are indexed by any given non-increasing order of weights, i.e., $w_{1}\geq w_{2} \geq \dots \geq w_{n}$. By Theorem \ref{Thm:PermNoRelease}, there exists an optimal permutation schedule $S^*$ minimizing the total weighted completion time.
	Suppose there is at least one pair of jobs $J_{j_1}, J_{j_2}$ with $j_1 < j_2$, but $C_{j_1}(S^*) > C_{j_2}(S^*)$.
	Swapping $J_{j_1}$ and $J_{j_2}$ in $S^*$ yields a new permutation schedule $\SchSec$, with objective value
	\begin{eqnarray*}
		\sum w_j C_j (\SchSec) &=&  \sum w_j C_j (S^*) \\
		& & +(w_{j_1}- w_{j_2})(C_{j_2}(S^*) - C_{j_1}(S^*)).
	\end{eqnarray*}
	Due to $j_1 < j_2$, we have $w_{j_1} \geq w_{j_2}$ and with $C_{j_1}(S^*) > C_{j_2}(S^*)$ it follows that \[(w_{j_1}- w_{j_2})(C_{j_2}(S^*) - C_{j_1}(S^*))\leq 0.\]
	Therefore, it we obtain \[\sum w_j C_j (\SchSec) \leq \sum w_j C_j (S^*).\]
	
	Performing such exchanges of jobs sequentially, we eventually obtain an optimal permutation schedule in which jobs are scheduled in the given order of non-increasing weights. Hence, any given ordering of the jobs by non-increasing weights is optimal.
	
	Part (ii) can be proven by exchange arguments analogous to (i). A detailed proof can be found in the appendix.
	\qed
\end{proof}

From Theorems \ref{Thm:ConstantM} and \ref{th:optimalOrderings}, we obtain the following complexity results.
\begin{theorem}\label{Thm:NoRelease}
	For a PFB without release dates and a fixed number $m$ of machines, the total weighted completion time can be minimized in $O(n^{m^2+2m-2})$ time. The same holds for the maximum lateness and the total tardiness.
\end{theorem}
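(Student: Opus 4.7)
The strategy is to combine the optimality of permutation schedules in the absence of release dates (Theorem \ref{Thm:PermNoRelease}), the explicit optimal orderings supplied by Theorem \ref{th:optimalOrderings}, and the dynamic program of Algorithm \ref{Alg:DynProg} from Theorem \ref{Thm:ConstantM}, in the same spirit as Corollaries \ref{Cor:FixedM} and \ref{Cor:Makespan} were derived for the makespan and the total completion time.

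First, I would check that each of the three objectives fits the framework of Theorem \ref{Thm:ConstantM}. Writing $\sum w_j C_j = \sum_{j=1}^n f_j(C_j)$ with $f_j(C)=w_j C$, $L_{\max} = \max_{j=1}^n f_j(C_j)$ with $f_j(C)=C-d_j$, and $\sum T_j = \sum_{j=1}^n f_j(C_j)$ with $f_j(C)=\max\{0,C-d_j\}$ exhibits each as a regular sum or bottleneck function with nondecreasing $f_j$. Second, I would invoke Theorem \ref{th:optimalOrderings} to produce an optimal job ordering $\sigma$ in $O(n\log n)$ time by sorting: non-increasing weight order for $\sum w_j C_j$, and an earliest due date order for $L_{\max}$ and $\sum T_j$. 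Third, I would run Algorithm \ref{Alg:DynProg} with the ordering $\sigma$. By Theorem \ref{th:optimalOrderings}, some optimal permutation schedule uses the ordering $\sigma$, and by Theorem \ref{Thm:PermNoRelease} some globally optimal schedule is a permutation schedule, so the output of the dynamic program is globally optimal for each of the three objectives.

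The remaining task is to sharpen the running time from the generic $\mathcal{O}(n^{m^2+4m+1})$ of Theorem \ref{Thm:ConstantM} to the claimed $\mathcal{O}(n^{m^2+2m-2})$. I would mirror the trick used in the proof of Corollary \ref{Cor:Makespan}: rather than running the dynamic program on all $m$ machines, solve an $(m-1)$-machine subproblem on $M_1,\dots,M_{m-1}$ equipped with suitably modified, still regular, per-job cost functions $f'_j$ that already anticipate the contribution of the last machine $M_m$, and then apply Theorem \ref{Thm:ConstantM} to the smaller problem to obtain a running time $\mathcal{O}\bigl(n^{(m-1)^2+4(m-1)+1}\bigr)=\mathcal{O}(n^{m^2+2m-2})$. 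The main obstacle will be to justify such a reduction for each of the three objectives: I need to show that, once the ordering $\sigma$ and the completion times $c_{(m-1)j}$ on the penultimate machine are fixed, the contribution of an optimal batching on $M_m$ can be encoded as a function of $c_{(m-1)j}$ and $j$ alone, analogous to the first-only-empty batching rule that makes this work for the makespan. For $\sum w_jC_j$ with jobs in non-increasing weight order and for $L_{\max}$, $\sum T_j$ with jobs in earliest due date order, I would argue by a pairwise exchange on $M_m$ (in the spirit of Theorem \ref{th:optimalOrderings}) that an optimal batching on $M_m$ fills the final batches greedily in the prescribed order, which lets one write $f'_j$ in closed form. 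This structural step is the only non-routine part; everything else is a direct reuse of previously established results.
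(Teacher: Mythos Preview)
Your reduction to Algorithm~\ref{Alg:DynProg} via Theorems~\ref{Thm:PermNoRelease}, \ref{th:optimalOrderings}, and~\ref{Thm:ConstantM} is correct and matches the paper; this already gives $\mathcal{O}(n^{m^2+4m+1})$.

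The gap is in your running-time argument. The last-machine trick from Corollary~\ref{Cor:Makespan} does \emph{not} transfer to $\sum w_jC_j$, $L_{\max}$, or $\sum T_j$. For the makespan it works because, with first-only-empty batching on $M_m$, the single quantity $C_{\max}$ can be written as $\max_j\bigl(c_{(m-1)j}+\lceil(n-j+1)/b_m\rceil p_m\bigr)$, i.e.\ a bottleneck of terms each depending only on $j$ and $c_{(m-1)j}$. For the other three objectives you need the individual completion times $C_j$, and these cannot be expressed as functions of $c_{(m-1)j}$ and $j$ alone: which batch $J_j$ lands in on $M_m$, and when that batch starts, depends on the arrival times $c_{(m-1)j'}$ of the \emph{other} jobs in the batch and on earlier batches. (A two-job example already shows that $C_1$ on $M_m$ depends on $c_{(m-1)2}$.) Moreover, first-only-empty batching on $M_m$ is only established for the makespan; no such fixed batching rule on $M_m$ is known or claimed for these objectives. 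So the ``closed-form $f'_j$'' you aim for does not exist, and the $(m-1)$-machine reduction collapses.

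The paper obtains the same exponent $m^2+2m-2$ by two entirely different, and much simpler, savings inside the full $m$-machine dynamic program. First, with $r_j\equiv 0$ the set $\Gamma_i$ in Definition~\ref{Def:GammaActive} has $\lvert\Gamma_i\rvert\le n^{i}$ instead of $n^{i+1}$ (the choice of $j'$ disappears), which shaves a factor $n^{2m}$ off the estimate~(\ref{Equ:RuntimeEstimate}). Second, by \cite[Theorem~1]{SungEtAl:ProblemReduction} last-only-empty batching on the \emph{first} machine $M_1$ is optimal for any regular objective, so the values $t_1,k_1$ are determined by $j$ and the factor $b_1\lvert\Gamma_1\rvert^2\le n^3$ drops out. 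Together this gives $\mathcal{O}(n^{m^2+4m+1-2m-3})=\mathcal{O}(n^{m^2+2m-2})$.
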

\begin{proof}
	 With a time complexity of $O(n^{m^2+4m+1})$, this statement follows directly from Theorems~\ref{Thm:ConstantM} and~\ref{th:optimalOrderings}. The improvement in the running time can be obtained by the following two observations. First note that, due to vanishing release dates, the size of $\Gamma_i$ is at most $n^i$ instead of $n^{i+1}$. Comparing with the estimate (\ref{Equ:RuntimeEstimate}) in the proof of Theorem \ref{Thm:ConstantM}, this yields an improvement by the factor $n^{2m}$ because each of the $m$ factors $\abs{\Gamma_i}^2$ can be bounded by $n^{2i}$ instead of $n^{2i+2}$. Second, \citet[Theorem 1]{SungEtAl:ProblemReduction} showed that it is optimal to use the so-called last-only-empty batching on the first machine $M_1$, that is, to completely fill all batches on $M_1$, except for the last batch, which might contain less jobs if $n$ is not a multiple of $b_1$. If we do so, the schedule of $M_1$ is fixed. This also fixes all values of the parameters $k_1$ and $t_1$ dependent on the current job index $j$. Hence, in the runtime estimation (\ref{Equ:RuntimeEstimate}), we may leave out the factor $b_1\abs{\Gamma_1}^2$. Since both $\Gamma_1$ and $b_1$ are bounded by $n$ (already considering the improvement above), this results in a further improvement by a factor of $n^3$ in the total runtime. Therefore, we obtain a total time complexity of \[O(n^{m^2+4m+1-2m-3})=O(n^{m^2+2m-2}).\tag*{\qed}\]
\end{proof}

It is also possible to minimize the (weighted) number of late jobs in $O(n^{m^2+2m-1})$ time. However, to achieve this result, it is necessary to adjust the algorithm from Section \ref{Sec:Algorithm} slightly.
Suppose that jobs are ordered in an earliest due date order and suppose further $\mathcal{J}$ is the set of on-time jobs in an optimal solution for PFB to minimize the (weighted) number of late jobs.
Note that in this case, we can find an optimal schedule by first scheduling all jobs in $\mathcal{J}$, in earliest due date order, using the algorithm from Section \ref{Sec:Algorithm}, and then schedule all late jobs in an arbitrary, feasible way.

Usually, however, the optimal set of on-time jobs is not known in advance. Thus the dynamic program has to be adapted in such a way that it finds the optimal set of on-time jobs and a corresponding optimal schedule simultaneously. 
This can be done by, roughly speaking, allowing the algorithm to ignore a job:
in the $j$-th step of the algorithm, when job $J_j$ (in earliest due date order) is scheduled, the algorithm can not only decide to schedule the job in sequence, but can alternatively decide to skip it, making it late (and paying a penalty $w_j$).
In all other regards, the algorithm remains exactly the same as presented in Section \ref{Sec:Algorithm}.
The technical details to prove correctness and running time of the algorithm are, as one would expect, very similar to Section \ref{Sec:Algorithm} and are moved to the appendix.

We obtain the following theorem. The additional factor $n$ in comparison to Theorem \ref{Thm:NoRelease} stems from the fact that we do not know the optimal permutation in advance, and therefore we cannot make use of last-only-empty batching on the first machine as easily as in the proof of Theorem \ref{Thm:NoRelease}.
\begin{theorem}\label{Thm:wjUj}
	For a PFB without release dates and a fixed number $m$ of machines, the weighted number of late jobs can be minimized in $\mathcal{O}(n^{m^2+2m-1})$ time.
\end{theorem}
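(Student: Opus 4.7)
My first step is to restrict attention to structured schedules. Sorting the jobs in earliest due date (EDD) order and using Theorem \ref{Thm:PermNoRelease}, I would apply a Moore--Hodgson-style exchange argument to show that some optimal schedule is a permutation schedule in which every on-time job appears before every late job, and the on-time jobs appear in EDD order. Indeed, an on-time job placed after a late job can be moved forward without losing its on-time status, and for two consecutive on-time jobs violating EDD, the usual swap keeps both of them on-time. Once this is established, the problem reduces to selecting which jobs to designate as on-time; the late jobs contribute only their weights $w_j$ and can be appended at the tail of the schedule arbitrarily.

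My second step is to adapt the dynamic program of Section \ref{Sec:Algorithm} so that, when processing jobs in EDD order, it can choose at each step whether to schedule $J_j$ on-time or declare it late. I would redefine $g(j,t,k)$ as the minimum weighted number of late jobs among $\{J_1,\dots,J_j\}$ taken over all $\Gamma$-active permutation schedules of the chosen on-time subset in which the \emph{last} on-time job has completion time vector $t$ and batch-size vector $k$. The recurrence then offers two options. In the \textbf{late} case, $g(j,t,k) \le g(j-1,t,k) + w_j$, with the state unchanged. In the \textbf{on-time} case, which is admissible only when $t_m \le d_j$, one applies the recurrence of Lemma \ref{Lem:DynRecurrence} with $f_j \equiv 0$, minimizing over predecessor states $(t',k')\in\Gamma\times\mathcal{B}$ satisfying condition $(\ast\ast)$ there. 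The optimum is then $\min_{t,k} g(n,t,k)$. Correctness follows by extending the inductive arguments of Lemmas \ref{Lem:DynStart} and \ref{Lem:DynRecurrence}, since skipping a job amounts to deleting it from the partial schedule without touching any batch.

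The third step, where I expect most of the work, is the running time analysis. A direct count gives $\mathcal{O}(n^{m^2+2m+1})$: we inherit the factor-$n^{2m}$ saving from $\abs{\Gamma_i} \le n^i$ in the release-date-free case, but, in contrast with Theorem \ref{Thm:NoRelease}, we cannot immediately realize the full $n^3$ saving obtained there from fixing the $M_1$ schedule via last-only-empty batching, because the total number of on-time jobs is not known in advance. To recover a factor $n^2$, I would still enforce a greedy batching rule on $M_1$: each newly on-time job joins the current last batch of $M_1$ unless that batch already contains $b_1$ jobs, in which case a new batch is opened. Then the start and completion time of every $M_1$ batch are determined purely by the number of on-time jobs scheduled so far, which is recoverable from the DP state through $(t_1,k_1)$. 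Consequently, in the count for condition $(\ast\ast)$ the $M_1$ contribution $b_1\abs{\Gamma_1}^2\approx n^3$ collapses to $\mathcal{O}(n)$ free choices for $k_1$, yielding the claimed bound $\mathcal{O}(n^{m^2+2m-1})$.

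The hardest part will be making paragraph three rigorous: verifying that imposing the greedy $M_1$-batching rule is without loss of optimality in the presence of the skip option, and then precisely bounding the reduced state-transition count on $M_1$. The remaining ingredients, namely EDD optimality for the on-time subset and DP correctness, are straightforward adaptations of arguments already developed for $\sum U_j$ on a single machine and for the DP of Section \ref{Sec:Algorithm}.
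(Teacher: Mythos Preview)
Your proposal follows essentially the same route as the paper's proof in Appendix~B: the same EDD structural lemma (the paper's Lemma~\ref{Lem:UjOrder}), the same redefinition of the DP state to track the last \emph{on-time} job, the same skip/schedule dichotomy in the recurrence, and the same $n^2$ runtime saving via last-only-empty batching on $M_1$ (the paper phrases it as restricting $t_1$ to $\Gamma'_1=\{p_1,2p_1,\dots,\lceil n/b_1\rceil p_1\}$ and observing that $(t'_1,k'_1)$ is then uniquely determined by $(t_1,k_1)$, which is exactly your greedy rule).

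There is one small omission in your recurrence. Your two cases (late, or on-time with a predecessor via $(\ast\ast)$) do not cover the situation where $J_j$ is the \emph{first} on-time job among $J_1,\dots,J_j$: if all earlier jobs were skipped, no valid predecessor state $(t',k')$ exists, so your on-time transition returns $+\infty$, and the late transition also fails because there is no finite $g(j-1,t,k)$ to fall back on (your state definition requires at least one on-time job). The paper handles this by adding a third term $x(j,t,k)=\sum_{j'<j} w_{j'}$ to the minimum, corresponding to ``$J_j$ is scheduled as the sole on-time job so far''; its recurrence reads $w(j,t,k)=\min\{x(j,t,k),\,y(j,t,k),\,w(j-1,t,k)+w_j\}$. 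This is easy to patch, but as written your DP would incorrectly return $+\infty$ on any instance whose optimal on-time set excludes $J_1$.
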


In particular, combining Theorem \ref{Thm:NoRelease} and Theorem \ref{Thm:wjUj}, we obtain that, for any fixed number $m$ of machines, the problems
$$Fm \mid p_{ij} = p_i, p\text{-batch}, b_i \mid f$$ 
with $f\in\{\sum w_jC_j, L_{\max}, \sum T_j, \sum U_j, \sum w_j U_j \}$ can be solved in polynomial time.

Note that, while minimizing makespan and total completion time in a PFB with a fixed number $m$ of machines and without release dates can be handled in the same manner, a time complexity of 
$O(n^{m^2+2m-2})$ is strictly speaking no longer polynomial for those two problems.
Indeed, since in problems $$Fm\mid p_{ij} = p_i, p\text{-batch}, b_i\mid f$$ with $f \in \{C_{\max}, \sum C_j \}$,
jobs are completely identical, an instance can be encoded concisely by simply encoding the number $n$ of jobs, rather than every job on its own.
Such an encoding of the jobs takes only $O(\log n)$ space, meaning that even algorithms linear in $n$ would no longer be polynomial in the length of the instance.

In the literature, scheduling problems for which such a concise encoding in $O(\log n)$ space is possible are usually referred to as \emph{high-multiplicity scheduling problems} \citep{HochbaumShamir:HighMultI, HochbaumShamir:HighMultII}. They are often difficult to handle and require additional algorithmic consideration, since even specifying a full schedule normally takes at least $O(mn)$ time (one start and end time for each job on each machine).
In order to resolve this issue, for these high-multiplicity scheduling problems special attention is often paid to finding optimal schedules with specific, regular structures. Such a structure may then allow for a concise $O(\log n)$ encoding of the solution. Notice, however, that a priori the dynamic programming algorithm presented in this paper does not give rise to such a regular structure.

In the case of $m =2$ machines and the makespan objective, \cite{Hertrich2018} shows that the $O(n)$ algorithm by \cite{Ahmadi:Batching} can be adjusted to run in polynomial time, even in the high multiplicity sense. 
For all other cases, the theoretical complexity of the two problems $$Fm\mid p_{ij} = p_i, p\text{-batch}, b_i\mid f$$ with $f \in \{C_{\max}, \sum C_j \}$ is left open for future research.
However, note that in practical applications jobs usually have additional meaning attached (for instance patient identifiers, in our pharmacy example from the beginning) and thus such instances are usually encoded with a length of at least $n$.

\section{Conclusions and Future Work}\label{Sec:Conclusion}

In this paper, motivated by an application in modern pharmaceutical production, we investigated the complexity of scheduling a proportionate flow shop of batching machines (PFB). 
The focus was to provide the first complexity results and polynomial time algorithms for PFBs with $m> 2$ machines. 

Our main result is the construction of a new algorithm, using a dynamic programming approach, which schedules a PFB with release dates to minimize the makespan or total completion time in polynomial time for any fixed number $m$ of machines.
In addition, we showed that, if all release dates are equal, the constructed algorithm can also be used to minimize the weighted total completion time, the maximum lateness, the total tardiness, and the (weighted) number of late jobs. Hence, for a PFB without release dates and with a fixed number of $m$ machines, these objective functions can be minimized in polynomial time. Previously these complexity results were only known for the special case of $m=2$ machines, while for each $m\geq 3$ the complexity status was unknown. For minimizing weighted total completion time and weighted number of late jobs, these results were even unknown in the case of $m=2$.

An important structural result in this paper is that permutation schedules are optimal for PFBs with release dates and the makespan or total completion time objective, as well as for any PFB problem without release dates.
This result was needed in order to show that the constructed algorithm can be correctly applied to the problems named above.

Concerning PFBs with release dates, recall that in the presence of release dates permutation schedules are not necessarily optimal for traditional scheduling objective functions other than makespan and total completion time (see Example \ref{Exa:NonPermSchedule}).
However, there are several special cases for which permutation schedules remain optimal, in particular if the order of release dates corresponds well to the order of due dates (in the case of due date objectives) and/or weights (in the case of weighted objectives). 
For example, \cite{Hertrich2018} shows that, when minimizing the weighted total tardiness, if there exists a job order $\sigma$ that is an earliest release date ordering, an earliest due date ordering, and a non-increasing weight ordering at the same time, then there exists an optimal schedule which is a permutation schedule with jobs ordered by $\sigma$.
This, of course, implies analogous results for weighted total completion time, total tardiness and maximum lateness (again, see \cite{Hertrich2018}).
In those special cases, clearly the algorithm presented in this paper can be used to solve the problems in polynomial time.
Still, the complexity status of PFBs with release dates in the general case, where orderings do not correspond well with each other, remains open for all traditional scheduling objectives other than makespan and total completion time.

Another problem left open for future study is the complexity of minimizing the weighted total tardiness in the case where all release dates are equal. This problem is open, even for the special case of $m = 2$.
In this paper, it is shown that permutation schedules are optimal for all objective functions in the absence of release dates. Furthermore, if an optimal job order is known and the number of machines is fixed, then an optimal schedule can be found in polynomial time, using the algorithm presented in this paper.

Unfortunately, the complexity of finding an optimal job order to minimize the weighted total tardiness in a PFB remains open.
Note that for scheduling a single batching machine with equal processing times and for usual proportionate flow shop without batching the weighted total tardiness can be minimized in polynomial time (\cite{SchedulingComplexityWebsite}). Here, proportionate flow shop can be reduced to the single machine problem with all processing times equal to the processing time of the bottleneck machine (see, e.g., \cite{SungEtAl:ProblemReduction}). 
However, such a reduction is not possible for PFBs, since there may be several local bottleneck machines which influence the quality of a solution.

Observe also that, in the special case of $m = 2$ machines, given an arbitrary schedule for the first machine, finding an optimal schedule for the second machine is similar to solving problem 
$1\mid p_j = p, r_j \mid \sum w_j T_j$ or the analogous problem on a batching machine, which also both have open complexity status (see \cite{Baptiste2000:BatchingIdenticalJobs, SchedulingComplexityWebsite}).
So attempting to split up the problem in such a manner does not work either. 
Of course, if instead of an arbitrary schedule on the first machine, some special schedule is selected, it may be possible that the special structure for the release dates on the second machine helps to solve the problem. Notice, though, that this still leaves open the question of which schedule to select on the first machine.

The last open complexity question concerns the case where the number of machines is no longer fixed, but instead part of the input. We are unfortunately not aware of any promising approaches to find reductions from NP-hard problems. However, interestingly, \cite{Hertrich2018} proves that for all versions of PFBs, minimizing the total completion time is always at least as hard as minimizing the makespan. This reduction does not hold for scheduling problems in general. 

One noteworthy special case, where positive complexity may be more readily achievable, arises from fixing the number of jobs $n$, leaving only the number of machines $m$ as part of the input. 
Note that in this case, high-multiplicity, as described in Section \ref{Sec:equalReleaseDates}, is not a concern: since $n$ is fixed, a schedule can be put out in $O(m)$ time and since each machine has an individual processing time, the input also has at least size $O(m)$.
For PFBs with exactly two jobs, $n=2$, and an arbitrary number of machines, \cite{Hertrich2018} shows that the makespan can be minimized in $O(m^2)$ time.
A machine-wise dynamic program is provided that finds all schedules, in which the completion time of one job cannot be improved without delaying the other job. In other words, the program constructs all Pareto-optimal schedules, if the completion time of each job is considered as a separate objective. 
If the number of jobs $n$ is larger than $2$, then the dynamic program can still be applied, but its runtime in that case is pseudo-polynomial (see \cite{Hertrich2018}).

Finally, note that while the algorithm provided in this paper has the benefit of being very general, thus being applicable to many different problems, it incurs some practical limitations due to its relatively large running time. For example, for $m=2$, our algorithm to minimize the makespan has a runtime of $\mathcal{O}(n^{6})$ (Corollary \ref{Cor:Makespan}). In contrast, the algorithm specialized to the case of $m=2$ by \citet{SungYoon:DynamicArrivalsTwoBPM} runs in $\mathcal{O}(n^2)$. The reason for the larger runtime of our dynamic program is that, due to its generality, it needs to consider all possible job completion times in the set $\Gamma$, while the algorithm by \citet{SungYoon:DynamicArrivalsTwoBPM} makes use of the fact that this is not necessary in the special case $m=2$.
 
Now that many of the previously open complexity questions are solved, for future research it would be interesting to find more practically efficient algorithms, in particular for small numbers of machines like $m = 3$ or $m = 4$, and/or for specific objective functions. 
This might be possible by using some easy structural properties, which were not of use for speeding up the algorithm in the general case but which may well help with dedicated algorithms for small numbers of machines. For example, if the objective function is regular, then on any machine, when a batch is started, it is always optimal to include as many jobs as possible (either all waiting jobs, or as many as the maximum batch size).
For the general algorithm provided in this paper, this property was not of use, as our algorithm adds the jobs one by one to the schedule and it only becomes clear how many jobs wait at a machine during the backtracking phase.
For a dedicated algorithm, on the other hand, it may well be possible use this property to reduce the number of possible schedules that have to be considered.

If fast, dedicated algorithms cannot be obtained, then instead approximations or heuristics need to be considered in order to solve PFBs in practice. In the literature, approximations and heuristics often use restrictions to permutation schedules in order to simplify difficult scheduling problems. 
To this end, the results on the optimality of permutation schedules presented in this paper can be of great use to future work. In particular, the results from Section \ref{Sec:Perm} can be of help when designing fast heuristics for makespan, total completion time and related objective functions.
For the same reason, it would be interesting to quantify the quality of permutation schedules (e.g. in terms of an approximation factor) for PFB problems where permutation schedules are not optimal. 

Another approach to improving the algorithms' efficiency, especially if negative complexity results for the case with an arbitrary number of machines are achieved, would be to consider PFBs from a parameterized complexity point of view. Recently, parameterized complexity has started to receive more attention in the scheduling community (cf.  \cite{Mnich2015, WeissEtAl2016, Mnich2018, DiesselAckermann2019}).
A problem with input size $n$ is said to be \emph{fixed-parameter tractable} with respect to a parameter $k$ if it can be solved in a running time of $\mathcal{O}(f(k)p(n))$ where $f$ is an arbitrary computable function and $p$ a polynomial not depending on $k$.
Although our algorithm is polynomial for any fixed number $m$ of machines, it is not fixed-parameter tractable in $m$ because $m$ appears in the exponent of the running time. 
Clearly, a fixed-parameter tractable algorithm would be preferable, both in theory and, most likely, in practice, if one could be found.

%


\bibliographystyle{spbasic}      
\bibliography{PFB_literature}   

%
%

\section*{Appendix A: Proof of Theorem \ref{th:optimalOrderings} (ii)}
In this appendix we give a detailed proof of Theorem \ref{th:optimalOrderings} (ii). Assume jobs are indexed in any given earliest due date ordering, i.e.\ $d_{1}\leq d_{2}\leq\dots\leq d_{n}$. By Theorem \ref{Thm:PermNoRelease}, there exists an optimal permutation schedule $S^*$ with respect to maximum lateness or total tardiness.
Suppose there is at least one pair of jobs $J_{j_1}$, $J_{j_2}$ with $j_1 < j_2$, but $C_{j_1}(S^*)>C_{j_2}(S^*)$.
Swapping $J_{j_1}$ and $J_{j_2}$ in $S^*$ yields a new permutation schedule $\SchSec$.

We first show that $L_{\max} (\SchSec) \leq L_{\max}(S^*)$. Because the lateness of all jobs other than $J_{j_1}$ and $J_{j_2}$ remains unchanged when performing the swap, it suffices to show that
\[\max \{L_{j_1}(\SchSec), L_{j_2}(\SchSec)\} \leq \max \{L_{j_1}(S^*), L_{j_2}(S^*)\}\]
where $L_j(\SchSec) = C_j(\SchSec) - d_j$.
However, due to $j_1 < j_2$, we have $d_{j_1} \leq d_{j_2}$. Hence, with $C_{j_1}(S^*)>C_{j_2}(S^*)$ it follows that \[L_{j_1}(\SchSec) = C_{j_2}(S^*) - d_{j_1} < C_{j_1}(S^*) - d_{j_1} = L_{j_1}(S^*)\]
and 
\[L_{j_2}(\SchSec) = C_{j_1}(S^*) - d_{j_2} \leq C_{j_1}(S^*) - d_{j_1} = L_{j_1}(S^*),\]
which finishes the proof for maximum lateness.

Finally, we show that $\sum T_j (\SchSec) \leq\sum T_j(S^*)$. Because the tardiness of all jobs other than $J_{j_1}$ and $J_{j_2}$ remains unchanged when performing the swap, it suffices to show that
\[T_{j_1}(\SchSec) + T_{j_2}(\SchSec) \leq T_{j_1}(S^*)+ T_{j_2}(S^*)\]
where $T_j(\SchSec) = \max\{C_j(\SchSec) - d_j,0\}$.
We distinguish three cases. Firstly, suppose $J_{j_1}$ and $J_{j_2}$ are both late in $\SchSec$, i.e., $C_{j_1}(\SchSec)> d_{j_1}$ and $C_{j_2}(\SchSec)> d_{j_2}$. Then it follows that
\begin{align*}
T_{j_1}(\SchSec) + T_{j_2}(\SchSec) 
&= C_{j_1}(\SchSec)-d_{j_1} + C_{j_2}(\SchSec)-d_{j_2}\\
&= C_{j_2}(S^*)-d_{j_1} + C_{j_1}(S^*)-d_{j_2}\\
&= C_{j_1}(S^*)-d_{j_1} + C_{j_2}(S^*)-d_{j_2}\\
&\leq T_{j_1}(S^*) + T_{j_2}(S^*).
\end{align*}
Secondly, suppose $J_{j_1}$ is on time in $\SchSec$, i.e.\ $C_{j_1}(\SchSec)\leq d_{j_1}$. Using $d_{j_1} \leq d_{j_2}$, we obtain
\begin{align*}
T_{j_1}(\SchSec) + T_{j_2}(\SchSec) 
&= 0 + \max\{C_{j_2}(\SchSec)-d_{j_2},0\}\\
&= \max\{C_{j_1}(S^*)-d_{j_2},0\}\\
&\leq \max\{C_{j_1}(S^*)-d_{j_1},0\}\\
&= T_{j_1}(S^*)\\
&\leq T_{j_1}(S^*) + T_{j_2}(S^*).
\end{align*}
Thirdly, suppose $J_{j_2}$ is on time in $\SchSec$, i.e.\ $C_{j_2}(\SchSec)\leq d_{j_2}$. Using $C_{j_1}(S^*)> C_{j_2}(S^*)$, we obtain
\begin{align*}
T_{j_1}(\SchSec) + T_{j_2}(\SchSec) 
&= \max\{C_{j_1}(\SchSec)-d_{j_1},0\}+0\\
&= \max\{C_{j_2}(S^*)-d_{j_1},0\}\\
&\leq \max\{C_{j_1}(S^*)-d_{j_1},0\}\\
&= T_{j_1}(S^*)\\
&\leq T_{j_1}(S^*) + T_{j_2}(S^*).
\end{align*}

Hence, performing the described swap does neither increase the maximum lateness nor the total tardiness. Therefore, as in (i), we can perform such exchanges sequentially in order to obtain an optimal permutation schedule with jobs scheduled in the given earliest due date order. Hence, any given earliest due date ordering of the jobs is optimal. \qed

\section*{Appendix B: Proof of Theorem \ref{Thm:wjUj}}
Suppose that jobs are indexed in an earliest due date order, i.e.\ $d_1\leq d_2 \leq ... \leq d_n$. As in Section \ref{Sec:Algorithm}, the modified dynamic program tries to schedule these jobs one by one, but it is also allowed to leave jobs out that will be late anyway, which then will be scheduled in the end after all on-time jobs. This procedure is justified by the following lemma.

\begin{lemma}\label{Lem:UjOrder}
	Consider a PFB instance $I$ without release dates to minimize the weighted number of late jobs. Suppose jobs are indexed by an earliest due date ordering. Then there exists an optimal permutation schedule in which the on-time jobs are scheduled in the order of their indices.
\end{lemma}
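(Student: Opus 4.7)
The plan is to start from any optimal permutation schedule $S^*$ guaranteed by Theorem \ref{Thm:PermNoRelease} and then apply a relabeling argument to move the on-time jobs into EDD order at the beginning. Write the sorted completion times of $S^*$ as $t_1 \leq t_2 \leq \cdots \leq t_n$, and let $\mathcal{J} = \{J_{i_1}, J_{i_2}, \ldots, J_{i_k}\}$ be its set of on-time jobs, indexed so that $i_1 < i_2 < \cdots < i_k$; by the assumed EDD indexing of the instance this also gives $d_{i_1} \leq d_{i_2} \leq \cdots \leq d_{i_k}$.

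To build the desired schedule $\hat S$, I would define the ordering $\sigma$ that places $J_{i_1}, J_{i_2}, \ldots, J_{i_k}$ in the first $k$ positions and the late jobs of $S^*$ in positions $k+1, \ldots, n$ in arbitrary order. Because all release dates are zero, $\sigma$ is trivially an earliest release date ordering, so Lemma \ref{lem:releaseDatePermu} applies to $S^*$ and $\sigma$ and returns a feasible permutation schedule $\hat S$ in which jobs appear in order $\sigma$ with exactly the same multi-set of completion times $t_1 \leq \cdots \leq t_n$. Using the defining property of permutation schedules, the job at position $h$ in $\hat S$ is $J_{\sigma(h)}$ and completes at time $t_h$; in particular, for $1 \leq h \leq k$ the job at position $h$ is $J_{i_h}$ with completion time $t_h$. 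Once I show that each such $J_{i_h}$ remains on-time in $\hat S$, the on-time set of $\hat S$ contains $\mathcal{J}$, so the weighted number of late jobs in $\hat S$ is no larger than in $S^*$, giving optimality of $\hat S$.

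The on-time verification is the technical crux and I expect it to be the main obstacle. Let $\ell_1 < \ell_2 < \cdots < \ell_k$ be the positions occupied by the members of $\mathcal{J}$ in $S^*$; then $\ell_h \geq h$, hence $t_{\ell_h} \geq t_h$. For each $g \geq h$ the job sitting at position $\ell_g$ in $S^*$ was on-time, so its due date is at least $t_{\ell_g} \geq t_{\ell_h} \geq t_h$. This produces at least $k - h + 1$ members of the multi-set $\{d_{i_1}, \ldots, d_{i_k}\}$ whose value is $\geq t_h$, and therefore the $h$-th smallest among these due dates, namely $d_{i_h}$, satisfies $d_{i_h} \geq t_h$. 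Hence $J_{i_h}$ is on-time at its new position in $\hat S$, completing the proof. The only other thing that could have gone wrong is feasibility of the rearrangement, but Lemma \ref{lem:releaseDatePermu} rules this out automatically: it keeps the batching structure of $S^*$ intact and merely re-attributes which job sits in which batch, which is harmless because all release dates are zero.
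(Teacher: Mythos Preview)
Your proof is correct and takes a genuinely different route from the paper's. The paper's argument is a reduction to Theorem~\ref{th:optimalOrderings}~(ii): it restricts attention to the subinstance $I_{\mathcal{J}}$ consisting of the on-time jobs of an optimal permutation schedule $S^*$, observes that the restricted schedule has $L_{\max}\le 0$, invokes the already-proved EDD optimality for maximum lateness to reorder those jobs in index order, and then appends the remaining jobs with enough delay that they stay late. Your argument is more self-contained: you apply Lemma~\ref{lem:releaseDatePermu} directly to push the jobs of $\mathcal{J}$ into the first $k$ slots and verify by a pigeonhole argument that each $J_{i_h}$ still meets its due date at its new position $h$. The paper buys brevity by reusing earlier machinery; your proof buys independence from Theorem~\ref{th:optimalOrderings} at the cost of that short counting step.

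One small point deserves care. You place the late jobs of $S^*$ in positions $k+1,\dots,n$ ``in arbitrary order'' and then only conclude that the on-time set of $\hat S$ \emph{contains} $\mathcal{J}$. That is enough for optimality of $\hat S$, but the lemma asks that \emph{all} on-time jobs of the constructed schedule appear in index order, so you must also rule out that some $S^*$-late job becomes on-time after moving to a smaller slot. The easy fix is to place the late jobs in positions $k+1,\dots,n$ in the same relative order they had in $S^*$: if the late jobs occupied positions $m_1<\cdots<m_{n-k}$ in $S^*$, then among positions $1,\dots,m_r$ exactly $r$ are late, so $m_r-r\le k$, hence $m_r\le k+r$ and $t_{k+r}\ge t_{m_r}$. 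The job that was late at position $m_r$ therefore completes no earlier at its new position $k+r$ and remains late. With this choice the on-time set of $\hat S$ is exactly $\mathcal{J}$, sitting in EDD order at the front, and the lemma follows.
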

\begin{proof}
	By Theorem \ref{Thm:PermNoRelease}, there exists an optimal permutation schedule $S^*$. Let $\mathcal{J}$ be the set of on-time jobs in $S^*$. Let $I_\mathcal{J}$ be the modified PFB instance that contains only the jobs in $\mathcal{J}$. By ignoring all jobs outside of $\mathcal{J}$, schedule $S^*$ defines a feasible permutation schedule $S^*_\mathcal{J}$ for $I_\mathcal{J}$ that has maximum lateness 0. By Theorem \ref{th:optimalOrderings} (ii), there exists also a permutation schedule $S'_\mathcal{J}$ for $I_\mathcal{J}$ in which jobs are scheduled in the order of their indices and that has maximum lateness 0. By scheduling the remaining jobs arbitrarily after the on-time jobs, $S'_\mathcal{J}$ can be extended to a schedule $S'$ for $I$ which has the same weighted number of late jobs as $S^*$ and in which all on-time jobs are scheduled in the order of their indices.\qed
\end{proof}

Recall that for a PFB instance $I$ and a job index $j\in[n]$, $I_j$ denotes the modified instance containing only jobs $J_1,\dots,J_j$. Also, recall the notation $\Gamma=\Gamma_1 \times \Gamma_2 \times \ldots \times \Gamma_m$ and $\mathcal{B}=\left[b_1\right] \times \left[b_2\right] \times \ldots \times \left[b_m\right]$, where $\times$ is the standard Cartesian product. The variables of the new dynamic program are defined almost as in Section \ref{Sec:Algorithm}. However, in contrast, we say that a permutation schedule $S$ for instance $I_j$ \emph{corresponds} to vectors \[t = (t_1, t_2, \ldots, t_i, \ldots, t_{m-1}, t_m) \in \Gamma\]
and
\[k = (k_1, k_2, \ldots, k_i, \ldots, k_{m-1}, k_m) \in \mathcal{B} \]
if the last \emph{on-time} job in $S$ finishes machine $M_i$ at time $t_i$ and is processed in a batch with exactly $k_i$ \emph{on-time} jobs there for all $i\in[m]$. If there are no on-time jobs, then $S$ cannot correspond to any vectors $t$ and $k$. Note that we do not require that $J_j$ itself is on time, just that there is some on-time job. The difference to Section \ref{Sec:Algorithm} is that we ignore all late jobs in this definition. Therefore, only for the on-time jobs it is assumed that they are processed in the order of job indices.

Similar to Section \ref{Sec:Algorithm}, we define $\mySet(j,t,k)$ to be the set of feasible permutation schedules $S$ for instance $I_j$ satisfying the following properties:
\begin{enumerate}
	\item the on-time jobs in $S$ are ordered by their indices,
	\item $S$ corresponds to $t$ and $k$, and
	\item $S$ is $\Gamma$-active.
\end{enumerate}
The variables of our new dynamic program are defined as follows. For $j\in[n]$, $t\in\Gamma$, and $k\in\mathcal{B}$, let
\[w(j,t,k)=w(j,t_1,t_2,\dots,t_m,k_1,k_2,\dots,k_m)\]
be the minimum weighted number of late jobs of a schedule $S\in\mySet(j,t,k)$.
If no such schedule exists, the value of $w$ is defined to be $+\infty$.

Using the above definitions as well as Lemmas \ref{Lem:optSchedGammaI} and~\ref{Lem:UjOrder}, if there is at least one on-time job in the optimal schedule, the minimum weighted number of late jobs is given by $\min_{t,k} w(n,t,k)$, $t\in\Gamma$, $k\in\mathcal{B}$. In Lemmas~\ref{Lem:DynStart2} and \ref{Lem:DynRecurrence2} (below) we provide formulas to recursively compute the values $w(j,t,k)$, $j=1,2,\dots,n$, $t\in\Gamma$, $k\in\mathcal{B}$. In total, we then obtain Algorithm \ref{Alg:DynProgII} as our modified dynamic program.

\begin{algorithm}[!ht]
	\caption{
	}
	\label{Alg:DynProgII}
	\begin{algorithmic}[1]
		\item[\begin{tabular}{ll}
			\textbf{Input:} & A PFB instance without release dates with\\
			&jobs indexed in an earliest due date\\
			&ordering.\\
			\textbf{Output:} & A schedule minimizing the weighted\\
			& number of late jobs.
		\end{tabular}]
		\item[]
		
		\State For each $t\in\Gamma$ and $k\in\mathcal{B}$, compute $w(1,t,k)$ according to Lemma \ref{Lem:DynStart2}.\label{Line:StartII}
		\State For each $j=2,3,\dots,n$, $t\in\Gamma$, and $k\in\mathcal{B}$, compute $w(j,t,k)$ according to Lemma \ref{Lem:DynRecurrence2}. If $w(j,t,k)<+\infty$, also store which job was the last on-time job added to the schedule and its corresponding $t$ and $k$ values.\label{Line:RecurrenceII}
		\State If there are $t$ and $k$ with $w(n,t,k)<+\infty$, find the values of $t$ and $k$ that minimize $w(n,t,k)$ and start a backtracking procedure from them to reconstruct the optimal schedule using the information stored in Step \ref{Line:RecurrenceII}. Otherwise, return an arbitrary feasible schedule.\label{Line:BacktrackII}
	\end{algorithmic}
\end{algorithm}

We now turn to providing and proving the correctness of the recursive formula to compute the values $w(j,t,k)$, $j=1,2,\dots,n$, $t\in\Gamma$, $k\in\mathcal{B}$. As the recursion is slightly more complicated than in Section \ref{Sec:Algorithm}, we first introduce two kinds of auxiliary variables $x$ and $y$ for our dynamic program. These are then used to formulate and prove Lemmas \ref{Lem:DynStart2} and \ref{Lem:DynRecurrence2}, which deal with the starting values $w(1,t,k)$ and the actual recurrence relation, respectively, in analogy to Lemmas \ref{Lem:DynStart} and \ref{Lem:DynRecurrence}. Finally, by connecting these results, we obtain the correctness and runtime of Algorithm \ref{Alg:DynProgII} as stated in Theorem~\ref{Thm:wjUj}.

For any $j\in[n]$, $t_i\in\Gamma_i$, and $k_i\in[b_i]$, $i\in[m]$, let 
\begin{align}\label{Eq:Defx}
x(j,t,k) = \left\{
\begin{array}{ll}
\sum_{j'=1}^{j-1} w_{j'}, & \text{if (i)--(iv) hold},\\
+\infty,  & \text{otherwise,}
\end{array} \right.
\end{align}
where
\begin{enumerate}[leftmargin=0.75cm, label=(\roman{enumi})]
	\item $k_i=1$ for all $i\in[m]$,
	\item $t_1\geq p_1$,
	\item $t_{i+1}\geq t_{i}+p_{i+1}$ for all $i\in[m-1]$, and
	\item $t_m\leq d_j$.
\end{enumerate}
The values of $x$ can be interpreted as the objective value of a schedule in $\mySet(j,t,k)$ for $I_j$ that only schedules $J_j$ on time and all other jobs late.

For $j>1$, $t_i\in\Gamma_i$, and $k_i\in[b_i]$, $i\in[m]$, let
\begin{align}\label{Eq:Defy}
\begin{split}
&y(j,t,k) \\
&\quad = \left\{
\begin{array}{ll}
\min\{w(j-1,t',k') \mid (**) \}, & \text{if (ii)--(iv) hold},\\
+\infty,  & \text{otherwise,}
\end{array} \right.
\end{split}
\end{align}
where the minimum over the empty set is defined to be $+\infty$. Here, $(**)$ is given by conditions 
\begin{enumerate}[leftmargin=0.9cm, label=(\roman{enumi}), resume]
	\item $t'_i\in\Gamma_i$,
	\item $k'_i\in[b_i]$,
	\item if $k_i=1$, then $t'_i\leq t_i-p_i$, and
	\item if $k_i>1$, then $t'_i=t_i$ and $k'_i=k_i-1$,
\end{enumerate}
for all $i\in[m]$. The values of $y$ can be interpreted as the best possible objective value of a schedule in $\mySet(j,t,k)$ that schedules $J_j$ and at least one other job on time.

Analogous to Lemma \ref{Lem:DynStart}, we obtain the following lemma.

\begin{lemma}\label{Lem:DynStart2}
	For $t_i\in\Gamma_i$ and $k_i\in[b_i]$, $i\in[m]$, we have $w(1,t,k)=x(1,t,k)$.
\end{lemma}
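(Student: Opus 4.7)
The plan is to prove this base case in close parallel to Lemma~\ref{Lem:DynStart}, with the twist introduced by the new notion of correspondence: a schedule in $\mySet(1,t,k)$ must possess at least one on-time job that finishes $M_i$ at $t_i$ and sits in a batch of exactly $k_i$ on-time jobs. Since $I_1$ contains only $J_1$, this forces $J_1$ itself to be on time in any schedule from $\mySet(1,t,k)$. The proof will therefore split into the standard ``necessary and sufficient'' structure: establish that the defining conditions (i)--(iv) of (\ref{Eq:Defx}) are exactly the conditions under which $\mySet(1,t,k)$ is nonempty, and show that when it is nonempty it contains the unique schedule whose weighted number of late jobs equals $\sum_{j'=1}^{0} w_{j'} = 0$.

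For the necessary direction, I would assume that some $S \in \mySet(1,t,k)$ exists and then read off each of (i)--(iv). Because $J_1$ has to be on time and is the only job, the unique on-time batch on each $M_i$ must be a singleton, giving (i). Feasibility of processing on $M_1$ (with $r_1 = 0$) together with the precedence constraints between consecutive machines yield (ii) and (iii). Finally, $J_1$ being on time translates directly into $t_m = c_{m1}(S) \leq d_1$, which is (iv). For the sufficient direction, I would argue exactly as in Lemma~\ref{Lem:DynStart}: setting $c_{i1}(S) = t_i$ uniquely defines a feasible schedule, which is $\Gamma$-active by $t_i \in \Gamma_i$; condition (iv) then guarantees $J_1$ is on time, so this schedule corresponds to $t$ and $k$ and hence lies in $\mySet(1,t,k)$. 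Its weighted number of late jobs is $0$, matching the value of $x(1,t,k)$.

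To conclude the equality $w(1,t,k) = x(1,t,k)$, I would observe that the vectors $t$ and $k$ uniquely determine the schedule in $\mySet(1,t,k)$ (for a single on-time job whose completion times and batch sizes are prescribed), so there is nothing to minimize over: the minimum coincides with the value $0$ for that unique schedule when (i)--(iv) hold, and is $+\infty$ over the empty set otherwise, which matches the definition of $x(1,t,k)$ in (\ref{Eq:Defx}).

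The only genuinely new point compared to Lemma~\ref{Lem:DynStart}, and the step I expect to require the most care, is handling condition (iv): one must argue cleanly that if $t_m > d_1$ then $\mySet(1,t,k)$ is empty, since the sole candidate schedule would make $J_1$ late and hence leave no on-time job to play the role of ``last on-time job'' required by the notion of correspondence. The rest of the argument is a straightforward transcription of the reasoning in Lemma~\ref{Lem:DynStart}.
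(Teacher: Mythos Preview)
Your proposal is correct and follows essentially the same argument as the paper's own proof: both establish that conditions (i)--(iv) are precisely the conditions under which $\mySet(1,t,k)$ is nonempty, construct the unique schedule $c_{i1}(S)=t_i$ in that case, and observe its objective value is $0=x(1,t,k)$. Your explicit emphasis on why (iv) is needed (no on-time job means no correspondence) is exactly the point the paper makes when it says ``a schedule for $I_1$ can only correspond to $t$ and $k$ if $J_1$ is on time.''
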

\begin{proof}
	As in the proof of Lemma \ref{Lem:DynStart}, conditions \mbox{(i)--(iii)} are necessary for the existence of a schedule in $\mySet(1,t,k)$. Also, since a schedule for $I_1$ can only correspond to $t$ and $k$ if $J_1$ is on time, (iv) is necessary as well. Conversely, if (i)--(iv) are satisfied, then the vector $t$ of completion times uniquely defines a feasible schedule $S \in \mySet(1,t,k)$ by $c_{i1}(S)=t_i$. Note that $S$ is the only schedule in $\mySet(1,t,k)$ and has objective value 0 and therefore, clearly, is optimal.\qed
\end{proof}

Analogous to Lemma \ref{Lem:DynRecurrence}, we now turn to the recurrence formula to calculate $w$ for $j>1$ from the values of $w$ for $j-1$.
\begin{lemma}\label{Lem:DynRecurrence2}
	For $j\geq 2$, $t_i\in\Gamma_i$, and $k_i\in[b_i]$, $i\in[m]$, the values of $w$ are given by
	\begin{align*}
	w(j,t,k) = \min\{x(j,t,k), y(j,t,k), w(j-1,t,k)+w_j\}.
	\end{align*}
\end{lemma}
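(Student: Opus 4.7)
I would prove the identity by establishing both inequalities, structuring the argument around a three-way case distinction that mirrors the three terms of the minimum: (a) $J_j$ is scheduled late; (b) $J_j$ is on-time and is the only on-time job in $I_j$; (c) $J_j$ is on-time, and at least one earlier job is also on-time (in which case, because jobs are indexed in earliest-due-date order and on-time jobs in any $S\in\mySet(j,t,k)$ appear in order of their indices, $J_j$ must be the last on-time job in $S$).

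For the ``$\leq$'' direction I would, for each of the three terms that is finite, exhibit a schedule in $\mySet(j,t,k)$ whose objective matches that term. For $w(j-1,t,k)+w_j$, take any $S'\in\mySet(j-1,t,k)$ achieving the optimum and append $J_j$ in singleton batches placed after all existing batches, so that $J_j$ is late and contributes exactly $w_j$ while the $(t,k)$-profile of the last on-time job is unchanged; the appended completion times can be selected in $\Gamma$ so that $\Gamma$-activity is preserved (invoking Lemma~\ref{Lem:optSchedGammaI} if needed). For $x(j,t,k)$, assuming conditions (i)--(iv) in \eqref{Eq:Defx} hold, schedule $J_j$ in singleton batches finishing at $t_i$ on each machine $M_i$ (feasible by (ii)--(iii), on-time by (iv)) and append $J_1,\dots,J_{j-1}$ afterwards so that all of them are late, giving objective $\sum_{j'<j} w_{j'}$. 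For $y(j,t,k)$, pick minimizers $(t',k')$ along with a schedule $S'\in\mySet(j-1,t',k')$ achieving $w(j-1,t',k')$, and insert $J_j$ into $S'$ by the same batch-extension construction used in the ``$\leq$'' part of Lemma~\ref{Lem:DynRecurrence}; conditions $(**)$ together with (iv) ensure the extended schedule is feasible, $\Gamma$-active, has $J_j$ as the on-time last job, and leaves the objective unchanged at $w(j-1,t',k')$.

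For the ``$\geq$'' direction, let $S\in\mySet(j,t,k)$ achieve $w(j,t,k)$ and branch on the status of $J_j$. If $J_j$ is late, then deleting $J_j$ from whichever batches contain it (without moving any other job) yields a feasible schedule $S'$ for $I_{j-1}$; the completion times of the remaining jobs are unchanged, so $S'$ is $\Gamma$-active and still corresponds to $(t,k)$, whence $w(j-1,t,k)\leq w(j,t,k)-w_j$. If $J_j$ is the unique on-time job, then every batch containing $J_j$ satisfies $k_i=1$, conditions (ii)--(iv) hold by feasibility and on-timeness of $J_j$, and the objective equals $\sum_{j'<j} w_{j'}=x(j,t,k)$. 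Otherwise $J_j$ is the last on-time job; deleting it gives $S'\in\mySet(j-1,t',k')$ where on each machine $M_i$ we have: if $k_i>1$, then $J_j$ shares a batch with the preceding on-time job, so $t'_i=t_i$ and $k'_i=k_i-1$; if $k_i=1$, then the preceding on-time job lies in an earlier batch completing at some $t'_i\le t_i-p_i$ with $t'_i\in\Gamma_i$ by $\Gamma$-activity of $S'$. These are precisely conditions (v)--(vi), so $(t',k')$ is admissible in the minimum defining $y$, giving $y(j,t,k)\leq w(j-1,t',k')\leq w(j,t,k)$.

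The main obstacle I anticipate is the bookkeeping in case (c) of the ``$\geq$'' direction: one must verify that after removing $J_j$, the resulting profile $(t',k')$ really satisfies every condition of $(**)$, which requires a careful machine-by-machine analysis analogous to the corresponding step of Lemma~\ref{Lem:DynRecurrence} but with ``on-time job'' consistently replacing ``job''. A secondary technicality is ensuring $\Gamma$-activity of the schedules constructed in the ``$\leq$'' direction, which can always be arranged by placing appended late jobs at completion times in $\Gamma$ and, if necessary, invoking Lemma~\ref{Lem:optSchedGammaI} to restore $\Gamma$-activity without altering the completion times of the on-time jobs or worsening the objective.
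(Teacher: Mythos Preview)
Your proposal is correct and follows essentially the same route as the paper: both directions are shown via the same three-way case split (job $J_j$ late; $J_j$ the sole on-time job; $J_j$ on-time together with at least one earlier on-time job), with the ``$\leq$'' side constructing witnesses and the ``$\geq$'' side restricting an optimal $S\in\mySet(j,t,k)$ to $I_{j-1}$, deferring the verification of $(**)$ to the argument of Lemma~\ref{Lem:DynRecurrence}. If anything, you are slightly more explicit than the paper about preserving $\Gamma$-activity when appending late jobs.
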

Roughly speaking, for scheduling $J_j$ there are three possibilities: either job $J_j$ is scheduled as the only on-time job, as one of several on-time jobs, or as a late job. In the first two cases, $w(j,t,k)$ equals $x(j,t,k)$ or $y(j,t,k)$, respectively, while in the third case it equals $w(j-1,t,k)+w_j$, as we will see in the proof.

{\renewcommand*{\proofname}{Proof of Lemma \ref{Lem:DynRecurrence2}:}
\begin{proof}
	We first prove the ``$\geq$'' direction. If the left-hand side equals $+\infty$, then this direction follows immediately. Otherwise, by definition of $w$, there exists a schedule $S$ for $I_j$ in $\mySet(j,t,k)$ with objective value $w(j,t,k)$. Schedule $S$ naturally defines a feasible permutation schedule $S'$ for instance $I_{j-1}$ by ignoring job $J_j$. We distinguish three cases.
	
	Firstly, if $J_j$ is not on time in $S$, then $S'$ is in $\mySet(j-1,t,k)$ and has an objective value of $w(j,t,k)-w_j$ because $J_j$ is not contained in instance $J_{j-1}$. Therefore, by definition of $w(j-1,t,k)$, it follows that $w(j-1,t,k)\leq w(j,t,k)-w_j$, which implies the ``$\geq$'' direction in the first case.
	
	Secondly, if $J_j$ is the only on-time job in $S$, then (i) must hold because $S$ corresponds to $k$ and only on-time jobs are counted in this definition. Furthermore, (ii) and (iii) must hold because $S$ is feasible and corresponds to $t$. Finally (iv) holds because $J_j$ is on time. Hence, we obtain $x(j,t,k) = \sum_{j'=1}^{j-1} w_{j'} = w(j,t,k)$ in this case, which implies the ``$\geq$'' direction in the second case.
	
	Thirdly, if $J_j$ is one of at least two on-time jobs in $S$, then $S'$ corresponds to unique vectors $t'$ and $k'$. Hence, we obtain $S'\in\mySet(j-1,t',k')$ and, therefore,
	\begin{align*}
	w(j,t,k) &= \sum_{j'=1}^j w_{j'} U_{j'}(S) \\&= \sum_{j'=1}^{j-1} w_{j'} U_{j'}(S')\\&\geq w(j-1,t',k').
	\end{align*}
	Analogous arguments to the proof of Lemma \ref{Lem:DynRecurrence} show that $t'$ and $k'$ satisfy $(**)$. Moreover, as in the second case, (ii)--(iv) hold because $S$ is feasible, corresponds to $t$, and $J_j$ is on time. Therefore, we have
	\begin{align*}
	w(j,t,k) \geq w(j-1,t',k') \geq y(j,t,k),
	\end{align*}
	which completes the ``$\geq$'' direction.
	
	Now we prove the ``$\leq$'' direction. If the right-hand side equals $+\infty$, then this direction follows immediately. Otherwise, we again distinguish three cases.
	
	Firstly, suppose the minimum on the right-hand side is attained by the term $w(j-1,t,k)+w_j$. Since this must be finite, there is a schedule $S'\in\mySet(j-1,t,k)$ for instance $I_{j-1}$ with objective value $w(j-1,t,k)$. By scheduling $J_j$ as a late job, $S'$ can be extended to a schedule $S$ that has objective value $w(j-1,t,k)+w_j$. Moreover, since $J_j$ is late, $S$ still corresponds to $t$ and $k$, i.e.\ $S\in\mySet(j,t,k)$, which implies $w(j,t,k)\leq w(j-1,t,k)+w_j$. Hence, the ``$\leq$'' direction follows in this case.
	
	Secondly, suppose the minimum on the right-hand side is attained by the term $x(j,t,k)$. Since this must be finite, we obtain that (i)--(iv) hold. Hence, the schedule $S$ which schedules $J_j$ such that $c_{ij}=t_i$ and all other jobs late is feasible for $I_j$, an element of $\mySet(j,t,k)$, and has objective value $x(j,t,k)=\sum_{j'=1}^{j-1}w_{j'}$. This implies $w(j,t,k)\leq x(j,t,k)$. Hence, the ``$\leq$'' direction follows in this case.
	
	Thirdly, suppose the minimum on the right-hand side is attained by the term $y(j,t,k)$. Since this must be finite, we obtain that (ii)--(iv) hold. Let $t'$ and $k'$ be the minimizers in the definition of $y(j,t,k)$. Again due to finiteness, there must be a schedule $S'\in\mySet(j-1,t',k')$ for $I_{j-1}$ with objective value $w(j-1,t',k')=y(j,t,k)$. By ignoring late jobs in $S'$ (which can be scheduled arbitrary late), we can use the same arguments as in the proof of Lemma \ref{Lem:DynRecurrence} to extend $S'$ to schedule $S\in\mySet(j,t,k)$ for $I_j$. Due to (iv), $J_j$ is on time in $S$. Hence, $S$ has the same objective value as $S'$, namely $y(j,t,k)$. Thus, it follows that $w(j,t,k)\leq y(j,t,k)$, which completes the ``$\leq$'' direction. \qed
\end{proof}}

Analogous to Theorem \ref{Thm:ConstantM}, we now deduce the desired theorem by proving correctness and running time of Algorithm \ref{Alg:DynProgII}.

{\renewcommand*{\mytheoremname}{Theorem \ref{Thm:wjUj}}
	\begin{mytheorem}
		For a PFB without release dates and a fixed number $m$ of machines, the weighted number of late jobs can be minimized in $\mathcal{O}(n^{m^2+2m-1})$ time.
	\end{mytheorem}
	\begin{proof}
	Without loss of generality, jobs are indexed in an earliest due date ordering. By Lemmas \ref{Lem:DynStart2} and \ref{Lem:DynRecurrence2}, Algorithm \ref{Alg:DynProgII} correctly computes the values of function $w$. If there is a feasible schedule where at least one job is on time, we obtain due to Lemmas \ref{Lem:optSchedGammaI} and \ref{Lem:UjOrder} that the minimum weighted number of late jobs is exactly $\min_{t,k} w(n,t,k)$, where the minimum is taken over each $t\in\Gamma$ and $k\in\mathcal{B}$. In this case the backtracking procedure in Step \ref{Line:BacktrackII} retrieves an optimal schedule. If, however, in every feasible schedule all jobs are late, then all values of $w$ will be $+\infty$. In this case, any feasible schedule is optimal, with objective value $\sum_{j=1}^n w_n$. Hence, in any case, the schedule returned by Algorithm \ref{Alg:DynProgII} is optimal.
	
	Recall from the proof of Theorem \ref{Thm:ConstantM} that $\left\lvert \Gamma \right\rvert \cdot \left\lvert\mathcal{B} \right\rvert\leq n^{\frac{m^2}{2}+\frac{5m}{2}}$. Therefore, Step \ref{Line:StartII} of Algorithm \ref{Alg:DynProgII} involves at most $n^{\frac{m^2}{2}+\frac{5m}{2}}$ iterations, each of which can be performed in constant time because $m$ is fixed.

	Similarly, Step \ref{Line:RecurrenceII} involves at most $\left(n-1\right) n^{\frac{m^2}{2}+\frac{5m}{2}} \leq n^{\frac{m^2}{2}+\frac{5m}{2}+1}$ iterations. In each of these iterations, computing $x(j,t,k)$ with (\ref{Eq:Defx}) can be done in constant time since $m$ is fixed. The same holds for computing $w(j,t,k)$ using Lemma \ref{Lem:DynRecurrence2}, once $y(j,t,k)$ is known. However, the computation of $y(j,t,k)$ by applying (\ref{Eq:Defy}) is the bottleneck of this step. In full analogy to the proof of Theorem \ref{Thm:ConstantM}, we obtain that the total time complexity of computing all $y$-values is bounded by $\mathcal{O}\left(n\prod_{i=1}^m b_i \abs{\Gamma_i}^2\right)$. Due to vanishing release dates, we have $\abs{\Gamma_i}\leq n^i$. Comparing with the estimation (\ref{Equ:RuntimeEstimate}) in the proof of Theorem \ref{Thm:ConstantM}, this results in a runtime bound of $\mathcal{O}(n^{m^2+2m+1})$.
	
	However, the following argument allows us to reduce this runtime by another factor of $n^2$, resulting in $\mathcal{O}(n^{m^2+2m-1})$. Even though we do not know the optimal job permutation beforehand, we know by \citep[Theorem 1]{SungEtAl:ProblemReduction} that it is optimal to use the so-called last-only-empty batching on the first machine $M_1$, that is, to completely fill all batches on $M_1$, except for the last batch, which might contain less jobs if $n$ is not a multiple of $b_1$. This has two consequences. First, it suffices to consider the $\left\lceil \frac{n}{b_1} \right\rceil$ possible completion times $\Gamma'_1=\left\{p_1, 2p_1, \dots, \left\lceil \frac{n}{b_1} \right\rceil p_1\right\}$ on $M_1$, instead of all elements of $\Gamma_1$. Second, for given $t_1\in\Gamma'_1$ and $k_1\in[b_1]$, the possible previous values $t'_1$ and $k'_1$ are uniquely determined. Thus, in our runtime estimate, we may replace $\abs{\Gamma_1}$ with $\abs{\Gamma'_1}$ and leave out the square, such that instead of the term $b_1\abs{\Gamma_1}^2$ we now have $b_1\abs{\Gamma'_1}\in\mathcal{O}\left(b_1\frac{n}{b_1}\right)=\mathcal{O}(n)$. Previously, this term was bounded by $\mathcal{O}(n^3)$, resulting in the claimed improvement.

	Finally, in Step \ref{Line:BacktrackII}, we take the minimum over at most $n^{\frac{m^2}{2}+\frac{5m}{2}}$ many values. The backtracking and reconstruction of the final schedule can be done in time $\mathcal{O}(n)$ for constant $m$.

	Hence, in total, Step \ref{Line:RecurrenceII} is the bottleneck and the total time complexity is $\mathcal{O}(n^{m^2+2m-1})$.\qed
	\end{proof}}

\end{document}